\documentclass[ASNA,twocolumn]{USG} 
\usepackage{anyfontsize} %
\usepackage{mathtools}
\mathtoolsset{showonlyrefs=true}

\usepackage{colortbl}   
\usepackage{xcolor}     

\newcommand{\changed}[1]{{#1}}
\usepackage{makecell}
\usepackage{rotating} 
\usepackage{tikz}
\usetikzlibrary{shapes,arrows,positioning,fit,backgrounds}

\usepackage{steinmetz}
\graphicspath{{./images/}}

\articletype{ORIGINAL ARTICLE}%

\received{}
\revised{}
\accepted{}
\journal{AIS Journal}
\volume{}
\copyyear{2025}
\startpage{}
\articledoi{}

\begin{document}
\thispagestyle{empty}
This manuscript is the peer-reviewed accepted version of an article accepted for publication in Advanced Intelligent Systems.
The final published version will be available via Wiley Online Library at DOI: 10.1002/aisy.202501223.

This article may be used for non-commercial purposes in accordance with Wiley Terms and Conditions for Use of Self-Archived Versions. This article may not be enhanced, enriched or otherwise transformed into a derivative work, without express permission from Wiley or by statutory rights under applicable legislation. Copyright notices must not be removed, obscured or modified. The article must be linked to Wiley’s version of record on Wiley Online Library and any embedding, framing or otherwise making available the article or pages thereof by third parties from platforms, services and websites other than Wiley Online Library must be prohibited.

\newpage

\title{``It Is Much Safer to Be Sparse than \changed{Connected}'': Safe Control of Robotic Swarm Density Dynamics with PDE-Optimization with State Constraints
}
\transtitle{``It Is Much Safer to Be Sparse than Connected'': Safe Control of Robotic Swarm Density Dynamics with PDE-Optimization with State Constraints
}

\author{Longchen Niu}[https://orcid.org/0009-0001-1901-7991]
\author{Gennaro Notomista}[https://orcid.org/0000-0002-1478-2790]

\authormark{NIU \textsc{et al.}}
\titlemark{``It Is Much Safer to Be Sparse than Dense'': Safe Control of Robotic Swarm Density Dynamics with PDE-Optimization with State Constraints}

\address{\orgdiv{Ecological and Resilient Autonomous roBots Lab (ERABLab), Department of Electrical and Computer Engineering, }\orgname{University of Waterloo, }%
\orgaddress{\state{Ontario, }\country{Canada}}}

\corres{Longchen Niu  (\email{l3niu@uwaterloo.ca})}

\editor{\textbf{Academic Editor:} XXX ~|~ \textbf{Guest Editor:} XXX}

\presentaddress{Univeristy of Waterloo}

\fundingInfo{This work has been partially supported by the NSERC Discovery Grant RGPIN-2023-03703.}

\keywords{PDE Optimization-based Control | Swarm Density Control | Safety-critical Applications | Voronoi-based Methods}

\transkeywords{PDE Optimization-based Control | Swarm Density Control | Safety-critical Applications | Voronoi-based Methods}

\abstract[ABSTRACT]{
This paper introduces a safety-critical optimization-based control strategy that leverages control Lyapunov and control barrier functions to guide the spatial density of robotic swarms governed by the Fokker–Planck equation to a predefined target distribution. In contrast to traditional open-loop state-constrained optimal control strategies, the proposed approach operates in closed-loop, and a Voronoi-based variant further enables distributed deployments. Theoretical guarantees of safety are derived, and numerical simulations demonstrate the performance of the proposed controllers. Finally, a multi-robot experiment showcases the real-world applicability of the proposed controllers under localization and motion noises, illustrating how it is much easier for a sparse swarm to satisfy safety specifications than it is for a densely packed one.
}

\abbr{PDE, partial differential equation; ODE, ordinary differential equation; FDM, finite difference method; FEM, finite element method.}

\contributed{Longchen Niu conducted this study under the supervision and guidance of Gennaro Notomista}


\copyright{This is an open access article under the terms of the \href{Creative Commons Attribution-NonCommercial}{Creative Commons Attribution-NonCommercial} License, which permits use, distribution and reproduction in any medium, provided the
original~work~is~properly cited and is not used for commercial purposes.
\\[5pt]
  ©  2025 The Author(s) \textit{AIChE Journal} published by Wiley Periodicals LLC on behalf of American Institute of Chemical Engineers.}


\maketitle


\section{Introduction}\label{Sec1: intro}

Density control in swarm robotics is essential for numerous applications, including environmental monitoring, wildfire suppression, search and rescue, agriculture, and medical diagnostics \cite{application_enivronment, Fire, application_SaR, application_agriculture, application_med}. By guiding the spatial distribution of a robotic swarm to align with a target Probability Density Function (PDF), these systems can autonomously perform critical tasks. Examples include mapping environmental parameters such as humidity or sunlight, deploying fire suppression agents in inaccessible locations, locating survivors in complex post-disaster terrains, optimizing crop management by identifying areas with low moisture or insufficient sunlight, and targeting medical interventions by concentrating on infected regions within the body \cite{8206299}. In many of these applications, it is paramount to be able to enforce invariance constraints, consisting of keeping the state of the system confined within a subset of the state space. This constraint is crucial, for instance, to protect both robotic assets and the surrounding environment, enhancing safety and operational robustness in real-world deployments. Approaches to ensure safety constraints in the context of multi-robot systems modeled by Ordinary Differential Equations (ODEs) are proposed, for instance, in \cite{wang2017safety, CBF, safety_car_distance}.

Optimal control approaches have traditionally been employed to ensure the optimality of the system's cost function over a, possibly infinite, time horizon. Recently developed techniques allow us to encompass state invariance constraints within optimal control frameworks (e.g. \cite{OptimalMain, exisitenceMeanFieldGameWithSTate,quasilinear_parabolic_PDE_OptimalControl_StateConstraint}). For ODEs, optimization-based control (OBC) frameworks are becoming increasingly popular thanks to the possibility of solving optimization problems in real-time in the tight control loop. Within this category, control barrier functions (CBFs) have emerged as a powerful technique for maintaining invariance \cite{CBF}. However, the extension of these safety or invariance methods to systems modeled by partial differential equations (PDEs) remains comparatively underexplored.

In this paper, we propose a novel density control strategy for robotic swarms that integrates CBFs with PDE-based formulations to enforce safety constraints. Our approach models the spatial density of teams of robots as PDFs to leverage the Fokker-Planck equation. This allows us to capture uncertainties from sensor and motion noise, thus providing robustness and scalability suitable for practical deployment as seen in \cite{RalPaper}. Building on this foundation, we develop an OBC that enforces swarm safety guarantees and introduce a computationally efficient variant based on Voronoi partitions. This speed-up not only reduces the typical scaling issue associated with centralized control, but also challenges the common stereotypes that a swarm is more efficient with more agents working in a confined space.
We further derive theoretical guarantees of state invariance for swarm safety and show through simulations that our OBC maintains safety in the presence of localization and motion noise, when the state-of-the-art \changed{optimal control (OC)} fails. Finally, we validated the accelerated OBC in a six-robot experiment, confirming its practical applicability. 

\section{Related Work}\label{sec2: related work}
Traditional OC methods have recently been explored for swarm density control. For instance, the authors of \cite{Optimal_noInvariance} employed an optimal control formulation based on a more general advection-diffusion-reaction PDE to achieve the desired swarm density distributions. However, this approach does not include state invariance, allowing nonzero densities in certain task regions during transient states, potentially compromising robot safety. In contrast, the novel formulation we propose in this paper explicitly maintains spatial state invariance throughout the control horizon, thereby ensuring consistent swarm safety conditions.

OC with set invariance has also been examined in recent studies. In \cite{Optimal_withInvariance}, for instance, invariance is enforced by explicitly excluding designated regions from the computational domain, solving instead for optimal control fields that stabilize swarm density outside these areas. Although computationally efficient and straightforward to implement, this exclusion-based approach is limited to static, pre-defined environments where the invariance constraint enforces zero density in a specified area at all times. Conversely, the OC method we examine is proposed in \cite{OptimalMain} and directly imposes state constraints over the entire domain. This allows the incorporation of time-varying constraints, such as dynamic hazard regions with moving obstacles or fluctuating wind conditions for drones. Additionally, it naturally accommodates time-varying target densities, facilitating a wider range of applications.

As regards to OBC, control Lyapunov functions (CLFs) have been applied to PDE-governed multi-agent systems. For example, \cite{CLF_PDE_NoDensity} utilized CLFs to achieve formation control in higher-order PDE-based systems. However, this approach relies on precise position measurements, making it vulnerable to measurement inaccuracies. To address this limitation, our formulation explicitly includes both measurement and motion noise within the Fokker-Planck equation, enhancing robustness and resilience to imperfect position measurements and motion actuation in practical applications.

Within the OBC framework, CBFs have been extensively studied to ensure robot safety in ODE-governed systems, as demonstrated in several examples mentioned in \cite{CBF}. Recent extensions of CBFs to PDE-governed systems employed Finite Element Methods (FEM) for discretization, rigorously addressing discretization errors to achieve high precision in the critical application of flexible structure control in fusion reactors \cite{CBF-PDE2ODE}. In contrast, our method uses the Finite Difference (FD) discretization of the Fokker-Planck equation for swarm density control. Although FD typically exhibits larger discretization errors compared to FEM, these errors remain acceptable given the smooth density fields on a uniform grid in our application, instead of irregular geometries in \cite{CBF-PDE2ODE}. However, for applications involving robots with discontinuous PDFs, FEM or Finite Volume Methods might be necessary for numerical stability. Consequently, the optimization-based feedback control we propose in this letter exhibits significantly lower computational complexity with the FD method, allowing even for an in-the-loop calculation of the control inputs.

\changed{Our previous work, \cite{RalPaper}, has shown} through mathematical proofs and experiments that incorporating measurement and motion noise through PDF and Fokker-Planck equation enhances robustness in practical applications. Building on this foundation, the work in \cite{MRS_paper} introduced a decentralized CLF-CBF OBC framework to enforce local safety under noise, provided formal mathematical bounds to quantify the difference between local and global safety, and validated the approach through quadcopter experiments. In parallel, this current paper focuses on centralized approaches to provide guarantees of global safety under noise. In addition, we compare our method with the state-of-the-art OC and introduce a Voronoi-based variant that improves scalability and performance, with validation through both simulations and experiments. 

\subsection{Contributions}
The main contributions of this work are as follows: 
\begin{itemize}
    \item \changed{We extend PDE-based swarm density control} with noise by incorporating CLF-CBF constraints, and provide rigorous safety guarantees even in the presence of localization and motion uncertainties. 
    \item We develop a Voronoi-based controller variant that significantly reduces the computational scaling challenge of the centralized controller, enabling scalable and real-time deployments. 
    \item We show through both theoretical proof and simulations that, contrary to the common stereotypes of swarms being more effective in large numbers and confined spaces, safety specifications are most effectively satisfied in sparse teams.
    \item We demonstrate through simulations and a six-robot experiment that our controller follows a target PDF while maintaining swarm safety even with noise, where the state-of-the-art optimal control fails. 
\end{itemize}

\section{Mathematical Formulation}\label{sec3: math formulation}
\begin{figure*}
    \centering
    \includegraphics[width=1.0\linewidth]{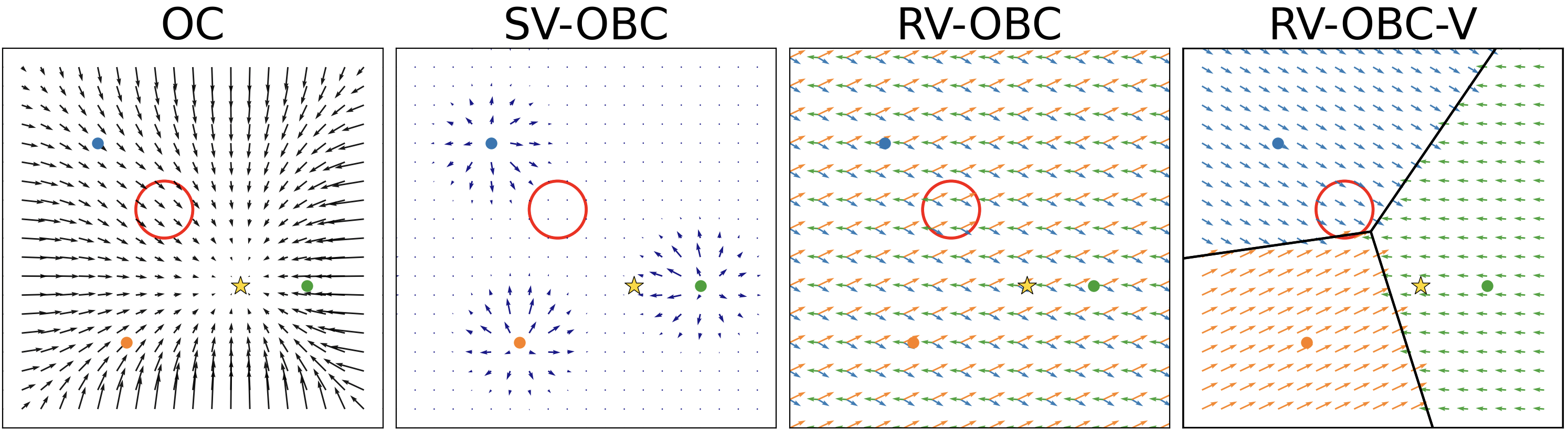}
    \caption{Comparison of four control strategies with \(3\)~robots (blue, orange, and green dots) \changed{with a red danger zone and a gold star target}. From left to right: OC precomputes a global control field; SV-OBC generates a new control field per step, focusing on where the robot PDF is high; RV-OBC produces a separate uniform field per robot; RV-OBC-V partitions the spatial domain and optimizes uniform control fields within each cell.}
    \label{fig: 4plot}
\end{figure*}
The primary objective of this work is to design a density-based controller for a team of robots whose motion is governed by stochastic dynamics. Specifically, the robots aim to match a predefined target PDF autonomously, representing a desired spatial distribution. To accurately capture the motion uncertainties inherent in swarm robotic systems, we employ a continuous-time Itô stochastic differential equation as presented in \cite{ANNUNZIATO2013487}:
\begin{equation}
    dX_t = b(X_t, t; u)\,dt + \sigma(X_t, t)\, dW_t,\quad X_{t_0}=X_0,
    \label{eq: general_model}
\end{equation}
where the state \(X_t \in \mathbb{R}^2\) denotes the robot position, the drift term \(b(X_t, t; u)\) models deterministic control input, and \(\sigma(X_t, t)\) represents stochastic perturbations proportional to a Wiener process \(W_t\in\mathbb{R}^m\) with independent increments.

While equation \eqref{eq: general_model} describes the noisy motion of individual robots \cite{doi:10.1177/0278364908100177}, swarm density control requires a macroscopic representation of the spatial robot density. Leveraging the averaging technique from \cite{ANNUNZIATO2013487, Archer2004DynamicalDF} and employing Itô calculus, we derive the Fokker-Planck PDE for the robot density dynamics with \changed{the per-robot control fields \(u_{\mathrm{fields(i)}}(r,t): \mathbb{R}^2 \times \mathbb{R} \;\to\; \mathbb{R}^2 \):
\begin{equation}
    \frac{\partial \rho(r, t)}{\partial t} = \sum_{i \in N} \bigg[-\nabla_i(u_{\mathrm{fields(i)}}(r,t) \  \rho_i(r,t)) + T\Delta_i \rho_i(r, t) \bigg],
    \label{eq:FP_equation}
\end{equation}}
where the robot ID is the subscript \(i\), the diffusion constant \(T\) represents the maximum magnitude of uniform Gaussian motion noise \cite{RalPaper}. 
\changed{
Here, \(\rho(r,t)\) should be interpreted as a belief-weighted physical probability mass of the swarm. Each \(\rho_i(r,t)\) represents a probabilistic belief (up to normalization) of robot \(i\)'s true position with the inverse covariance (precision) matrix \(\Sigma(i)\), encoding the confidence in the measured position \(x_i\). At the extreme, when measurement is perfect, \(\Sigma(i) \rightarrow 0\), \(\rho_i(r,t)\) becomes a delta function at the robot's true position, and the proposed PDE-based framework is no longer applicable. Furthermore, we assume identical localization uncertainty for all robots, \(\Sigma = \Sigma(i) \forall i \in N\), to reflect identical sensing hardware.
}
Therefore, the swarm density \(\rho(r, t)\) at position \(r\) and time \(t\) can be represented by a sum over \(N\) individual robot positions \(x_i\):
\begin{equation}
    \rho(r, t) = \sum_{i=1}^{N} \rho_i = \sum_{i=1}^{N}  \exp\left(-\frac{1}{2}(r - x_i)^T \Sigma (r - x_i)\right).
\label{eq: rho def}
\end{equation}

\changed{
The Gaussian kernels in \eqref{eq: rho def} can be normalized by a constant factor depending only on \(\Sigma\); this constant scaling can be absorbed into the density and associated thresholds without affecting the subsequent analysis. 
Under this interpretation, the Fokker-Planck equation \eqref{eq:FP_equation} describes the evolution of the collective belief distribution over the robots' position measurements.} 
Importantly, equation \eqref{eq:FP_equation} is deterministic, as it characterizes the swarm density distribution from a statistical perspective rather than tracking individual stochastic trajectories in \eqref{eq: general_model}.

Having established the background knowledge on how the noise can be naturally encoded in PDF and Fokker-Planck equations, we step back to take a high-level look at the four controllers proposed in the remainder of this section. Figure~\ref{fig: 4plot} shows an example with \(3\)~robots highlighting their distinct control strategies.

\subsection{Optimal Control}
We define the control objective as the cost functional:\changed{
\begin{equation}
\begin{aligned}
J(\rho, u_\mathrm{field}) = &\int_0^{t_f} \int_\Omega \frac{\alpha}{2}\|u_\mathrm{field}(r,t)\|^2 \rho(r,t) \ dr \ dt \\
+ &\int_0^{t_f} \frac{1}{2}\|\rho_d(r,t) - \rho(r,t)\|_{L^2(\Omega)}^2 \ dt \\
+ &\frac{1}{2}\|\rho_d(r,t_f)-\rho(r,t_f)\|_{L^2(\Omega)}^2,
\end{aligned}
\label{eq:cost_function}
\end{equation}}
where \({t_f}\) is the terminal time, \(\rho_d(r,t)\) denotes the desired target density, \(\Omega \subseteq \mathbb{R}^2\) is a 2D spatial domain, \(u_\mathrm{field} \) is a single control input field applied identically to all robots, i.e., \changed{\(u_\mathrm{field}= u_\mathrm{fields(i)} \, \forall i \in N\)}, and \(\alpha\) is a weight coefficient balancing density matching accuracy and control effort. This formulation aligns with the optimal control problem described in \cite{OptimalMain}, corresponding to the terms: 
\changed{
\begin{equation}
\begin{aligned}
    &L=\frac{\alpha}{2}\|u_\mathrm{field}(r,t)\|^2, \quad F=\frac{1}{2}\|\rho_d(r,t)-\rho(r,t)\|_{L^2(\Omega)}^2,\\
    &G=\frac{1}{2}\|\rho_d(r,t_f)-\rho(r,t_f)\|_{L^2(\Omega)}^2.
\end{aligned}
\label{eq:Representaitons}
\end{equation}
}

To employ the optimal control formulation in \cite{OptimalMain}, we utilize the Hamiltonian \(H\), defined as the convex conjugate of \(L\) (see \cite{briani2016stablesolutionspotentialmean}), \(H(r, p) = \sup_{q}\{-p\cdot q - L(r, q)\} = \frac{1}{2\alpha}\|p\|^2\), and \(\nabla_p H(r, p)=\frac{1}{\alpha}p\).

In practical scenarios, spatial safety constraints can be mathematically encoded using a state invariance constraint:
\begin{equation}
    \psi(\rho) = \int_{\mathcal{A}}\rho^2(r,t)\,dr - \epsilon \leq 0,
\end{equation}
where \(\mathcal{A}\subseteq\Omega\) represents the restricted area and \(\epsilon\) is \changed{the specified threshold on the aggregated belief mass within \(\mathcal{A}\), representing an acceptable level of residual uncertainty due to localization noise.}

With these considerations, the corresponding optimality conditions, incorporating the safety constraint as derived from Theorem 2.2 in \cite{OptimalMain}, can be summarized as:
\begin{equation}
    u_\mathrm{field}=-\nabla_p H(r,\nabla w)=-\frac{1}{\alpha}\nabla w,
\end{equation}
with \(w = w(r,t)\) being the adjoint state (solution of the Hamilton-Jacobi-Bellman equation in \cite{OptimalMain}), subject to the coupled PDE system:
\begin{equation}
\begin{aligned}
\begin{cases}
    &-\partial_t w(r,t) + \frac{1}{2\alpha}\|\nabla w(r,t)\|^2 - T\Delta w(r,t) = \\& \hspace{2.2cm}2\nu(t)\int_{\mathcal{A}}\rho(r,t)\,dr - \int_{\Omega}(\rho_d-\rho)\ dr,\\
    &\partial_t\rho(r,t) + \nabla(u_\mathrm{field}\ \rho(r,t)) - T\Delta\rho(r,t)=0,\\
    &w(r,{t_f})=2\eta\int_{\mathcal{A}}\rho(r,{t_f})\,dr - \int_{\Omega}(\rho_d({t_f})-\rho({t_f})) \ dr,\\
    &\rho(r,0)=\rho_0(r),
\end{cases}
\end{aligned}
\label{eq:Optimal_System}
\end{equation}
where the Lagrange multipliers \(\nu(t)\geq0\) and \(\eta\geq0\) enforce the safety constraint, being strictly positive when the constraint is active, i.e., when \(\psi(\rho)=0\), and \(0\) otherwise. 

\subsection{OC \textsc{vs} OBC Frameworks}
\label{sec: Opt vs Opt}
A fundamental difference between the two types of controllers, OC and OBC, lies in their approach to feedback: open-loop versus closed-loop control. OC exemplifies an open-loop strategy, where control inputs are predetermined and independent of real-time system outputs. Specifically, in \eqref{eq:Optimal_System}, a control field is computed solely with the initial density distribution of the robots. Then, the robots navigate according to this precomputed field, as illustrated in Fig.~\ref{fig:BlockDiagram Optimal}. The dashed line in the diagram denotes a single communication event at the initialization phase, after which the system evolves without further input from the controller. Although OC formally uses state feedback, this information is not used for optimization but just for a static lookup table. Consequently, the controller does not adapt to any disturbances, an ability typically implied with feedback control, and can quickly lose optimality. 
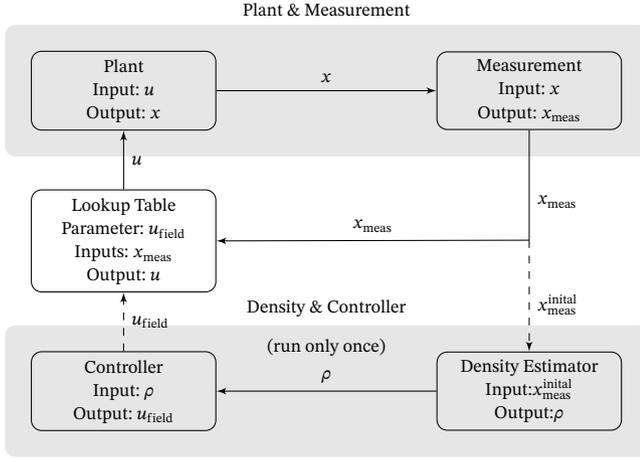
\begin{figure}[t]
    \centering
    \resizebox{\linewidth}{!}{\usetikzlibrary{shapes,arrows,positioning,fit,backgrounds}

\begin{tikzpicture}[auto, >=latex', node distance=2cm, every node/.style={font=\footnotesize}]
    \tikzstyle{block} = [draw, rectangle, rounded corners,
        minimum height=2.5em, minimum width=8em, align=center];
    \tikzstyle{line} = [draw, -latex'];
    \tikzstyle{group} = [fill=gray!20, rounded corners, inner sep=10pt, draw=none];

    \node[block] (plant) 
        {Plant\\Input: \(u\)\\Output: \(x\)};
    \node[block, right=3cm of plant] (measure) 
        {Measurement\\Input: \(x\)\\Output: \(x_{\mathrm{meas}}\)};

    \node[block, below=0.8cm of plant] (compute) 
        {Lookup Table\\Parameter: \(u_{\mathrm{field}}\)\\ Inputs: \(x_{\mathrm{meas}}\)\\Output: \(u\)};

    \node[block, below=0.8cm of compute] (controller) 
        {Controller\\Input: \(\rho\)\\Output: \(u_{\mathrm{field}}\)};
    \node[block, right=3cm of controller] (density) 
        {Density Estimator\\Input:\(x_{\mathrm{meas}}^\mathrm{inital}\)\\Output:\(\rho\)};

    \begin{pgfonlayer}{background}
        \node[group, fit=(plant)(measure), 
              label=above:{\footnotesize Plant \& Measurement}] (topgroup) {};
        \node[group, fit=(density)(controller), 
          label=above:{\footnotesize Density \& Controller}] (botgroup) {};
    \node at ([yshift=-0.3cm]botgroup.north) { (run only once)};
\end{pgfonlayer}
    \draw[line] (plant) -- node[pos=0.5, above] {\(x\)} (measure);

    \draw[-] (measure.south) -- ++(0,-1.5) 
    node[right, yshift=15pt]{\(x_{\mathrm{meas}}\)}
    coordinate (splitpt);

    \draw[line] (splitpt) -- (compute.east) node[midway,above] {\(x_{\mathrm{meas}}\)};
    
    \draw[line, dashed, -latex'] (splitpt) -| node[midway, yshift=-25pt] {\(x_{\mathrm{meas}}^\mathrm{inital}\)} (density.north);

    \draw[line] (density) -- node[above] {\(\rho\)} (controller);

    \draw[line, dashed, -latex'] (controller.north) -- node[right] {\(u_{\mathrm{field}}\)} (compute.south);

    \draw[line] (compute.north) -- node[right] {\(u\)} (plant.south);

\end{tikzpicture}}
    \caption{System diagram for OC. The control field is computed once at initialization and remains unchanged throughout the operation.}
    \label{fig:BlockDiagram Optimal}
\end{figure}

In contrast, OBC operates within a closed-loop framework, continuously adjusting the control inputs based on real-time feedback. At each time step, the system measures the robots' positions and uses this data to compute an updated control command for the next time step. This feedback-driven approach, depicted in Fig.~\ref{fig:BlockDiagram Optimization}, allows the controller to dynamically respond to any disturbances and uncertainties in the system.

While the OC method possesses a theoretical guarantee for convergence, the numerical implementation of OC presents significant challenges. For instance, the state invariance constraint is inherently discontinuous. As the robot approaches the hazard area, the Lagrange multipliers \(\nu(t)\) and \(\eta\) in \eqref{eq:Optimal_System} jump from zero to a finite value, leading to a sudden rise in an area where the density of the whole field was previously smooth, causing discontinuities along the boundary \(\partial \mathcal{A}\). 
Compared to OC, an OBC exhibits significantly less numerical instability. This is primarily because it solves for only a single timestep at a time with a smooth resetting density field. In contrast, OC must account for the entire temporal field, allowing numerical errors to accumulate and potentially destabilize the system. The remainder of this section introduces three distinct OBCs, beginning with the OC-inspired SV-OBC.

\subsection{Swarm Velocity Optimization-Based Control (SV-OBC)}
Due to its similarity with the well-established CLFs, CBFs can be naturally combined with CLF methods in an optimization-based framework, which has shown success in a wide range of real-time robotic applications \cite{wang2017safety, CBF}. Instead of directly optimizing the cost defined in \eqref{eq:cost_function}, the approach we propose in this paper starts by formulating the control objective in terms of a CLF constraint, thereby providing explicit guarantees for stability and convergence. Specifically, we define a CLF as a measure of deviation from the desired density, \(F\) in \eqref{eq:Representaitons}, and reformulate the cost function \eqref{eq:cost_function} into an optimization problem where the primary objective is the minimization of the control field \( u_\mathrm{field} \). \changed{Notably, since \(F(t_f) = G(t_f)\) in \eqref{eq:Representaitons}, omitting \(G\) in the optimization framework is acceptable for two reasons. First, as \(t_f\) increases, the contribution from \(G\) diminishes compared to the running cost \(F\).} Second, by using \(F\) as the Lyapunov function, this formulation inherently minimizes \(F\) and, consequently, the terminal cost \(G\) whenever possible.

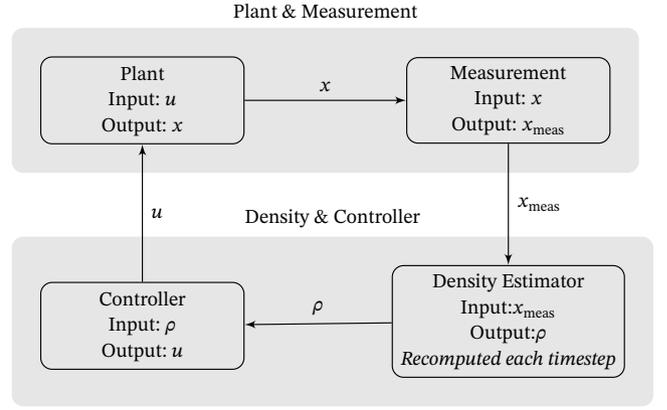
\begin{figure}[t]
    \centering
    \resizebox{\linewidth}{!}{\usetikzlibrary{shapes,arrows,positioning,fit,backgrounds}

\begin{tikzpicture}[  auto, >=latex',
  node distance=2cm,
  every node/.style={font=\footnotesize},
  every label/.style={font=\footnotesize}]
    \tikzstyle{block} = [draw, rectangle, rounded corners,
        minimum height=2.5em, minimum width=8em, align=center];
    \tikzstyle{line} = [draw, -latex'];
    \tikzstyle{group} = [fill=gray!20, rounded corners, inner sep=10pt, draw=none];

    \node[block] (plant) 
        {Plant\\Input: \(u\)\\Output: \(x\)};
    \node[block, right=2cm of plant] (measure) 
        {Measurement\\Input: \(x\)\\Output: \(x_{\mathrm{meas}}\)};

    \node[block, below=1.7cm of plant] (controller) 
        {Controller\\Input: \(\rho\)\\Output: \(u\)};
    \node[block, below=1.5cm of measure] (density) 
    {Density Estimator\\Input:\(x_{\mathrm{meas}}\)\\Output:\(\rho\)\\\emph{Recomputed each timestep}};

    \begin{pgfonlayer}{background}
        \node[group, fit=(plant)(measure), 
              label=above:{\footnotesize Plant \& Measurement}] (topgroup) {};
        \node[group, fit=(density)(controller), 
              label=above:{\footnotesize Density \& Controller}] (botgroup) {};
    \end{pgfonlayer}

    \draw[line] (plant) -- node[pos=0.5, above] {\(x\)} (measure);

    \draw[line] (measure.south) -- node[right] {\(x_{\mathrm{meas}}\)} (density);
    \draw[line] (density) -- node[above] {\(\rho\)} (controller);

    \draw[line] (controller.north) -- node[right] {\(u\)} (plant.south);

\end{tikzpicture}}
    \caption{System diagram for OBC. The entire loop is recomputed at each timestep to achieve real-time feedback.}
    \label{fig:BlockDiagram Optimization}
\end{figure}

The safety constraint, typically enforced via a positive barrier function \(h\), is defined as the additive inverse of the original constraint function \(\psi\): \(h(\rho) = -\psi(\rho) = \epsilon - \int_{\mathcal{A}}\rho^2(r)\,dr\). From this point onward, we omit the explicit time dependence in the notation. Since the optimization problem is solved at the current timestep, all quantities are implicitly evaluated at the same timestep \(t_k\). 
With this definition, the OBC incorporating both CLF and CBF constraints can be formulated as:
\begin{equation}
    \begin{aligned}
    \min_{u_\mathrm{field},\,s} &\quad \|u_\mathrm{field}\|_{L^2(\Omega)}^2 + \gamma s\\
    \quad \text{s.t.} &\quad \alpha_v V(\rho) + \dot{V}(\rho,u_\mathrm{field}) - s\leq 0\\
    &\quad \alpha_h h(\rho) + \dot{h}(\rho,u_\mathrm{field}) \geq 0,
\end{aligned}
\label{eq:OptimizationBased}
\end{equation}
where \(V(\rho) = F = \frac{1}{2}\|\rho_d - \rho\|_{L^2(\Omega)}^2\) is the positive semi-definite Lyapunov function that encourages convergence toward the desired density distribution. Here, \(\alpha_v, \alpha_h > 0\) determine the convergence rate toward the CLF and CBF constraints, respectively. 
\changed{The positive weighting factor \(\gamma\) penalizes deviation from the CLF objective through the slack variable \(s \geq 0\). The CBF is imposed as a hard safety constraint, while the CLF is softened via \(s\) to ensure the feasibility of the solution.}
The time derivatives, \(\dot{V}\) and \(\dot{h}\), are given by:
\begin{equation}
\begin{aligned}
    \dot{V}(\rho,u_\mathrm{field}) = - \int_{\Omega} (\rho_d - \rho) \rho_t\,dr, \quad
    \dot{h}(\rho,u_\mathrm{field}) = -\int_{\mathcal{A}} 2\rho \ \rho_t\,dr,
\end{aligned}
\end{equation}
where \(\rho_t(u_\mathrm{field},r)\) is a one-step Fokker-Planck equation based on \eqref{eq:FP_equation}.
This systematic framework easily accommodates varying numbers of robots, as the density field is constructed with any robot quantity without computational overhead.

To implement the infinite-dimensional \changed{optimization problem} \eqref{eq:OptimizationBased}, we employ a central FD scheme with periodic boundary conditions \changed{(see Remark~\ref{remark: boundary conditions})}. This means, on a uniformed 2D grid of size \(N_x \times N_y\) with spacing \(l\), the \changed{continuous density fields} \(\rho, \rho_d \in L^2(\Omega)\) and \changed{control field} \(u_\mathrm{field} \in L^2(\Omega; \mathbb{R}^2)\) are approximated and then flatten as \(\hat{\rho},\hat{\rho}_d\in \mathbb{R}^{N_d}\) and \(\hat{u}\in \mathbb{R}^{2N_d}\), respectively at grid points with \(N_d = N_x N_y\). Then, \eqref{eq:OptimizationBased} becomes:
\begin{equation}
    \begin{aligned}
    \min_{\hat{u},s} &\quad \hat{u}^T \, \hat{u} l^2 + \gamma s\\
    \quad \text{s.t.} &\sum_{i=1}^{N_d} \left[\frac{\alpha_v}{2} \left(\hat{\rho}_d^{i} - \hat{\rho}^{i} \right)^2 - \left( \hat{\rho}_d^{i} - \hat{\rho}^{i} \right) \hat{\rho}_t^{i} \right]l^2 \leq s\\
    &\alpha_h \epsilon + \sum_{i \in \mathcal{A}} \Big[ - \alpha_h (\hat{\rho}^{i})^2  - 2  \hat{\rho}^{i} \hat{\rho}_t^{i} \Big]l^2  \geq 0\\
    & \hat{u}^{i} \in [\underline{u}, \overline{u}]\quad\forall i,
\end{aligned}
\label{eq:OptimizationBased_discretized}
\end{equation}
where the superscript \(i\) denote the flattened grid position. \(\hat{\rho}_t\) is the ODE approximation of the PDE \eqref{eq:FP_equation} using the FD method: \(\hat{\rho}_t =  A(\hat{\rho}) \hat{u} + T \, B \hat{\rho}, \)
\changed{
where \(A(\hat\rho)\) and \(B\) are constant sparse matrices obtained from central FD, with the five-point Laplacian \(B\) multiplied by the scalar diffusion coefficient \(T\).}
The last constraint represents a component-wise bound on the control input with:
\begin{equation}
    [\underline{u}, \overline{u}] \supseteq \left[ \min \left( -A^{\dagger}(\hat{\rho})\, T B \hat{\rho} \right),\ \max \left( -A^{\dagger}(\hat{\rho})\, T B \hat{\rho} \right) \right],
\end{equation}
where \(A^\dagger\) is the Moore–Penrose pseudoinverse of \(A\). 
With this, it is clear to see that system \eqref{eq:OptimizationBased_discretized} is a convex quadratic cost optimization problem with affine constraints, enabling fast computational speed for real-time implementation.

\begin{lemma}
\label{lem:pseudoinverse}
Using a central-difference scheme with periodic or homogeneous Neumann boundary conditions and odd grid sizes \(N_x, N_y\), \changed{
the discretized diffusion term \(b=-T\,B\hat\rho\), where \(B\) is the standard 5-point Laplacian, lies in the image of the advection operator
\[
A(\hat\rho)=\big[\,D_x\mathrm{Diag}(\hat\rho)\;\; D_y\mathrm{Diag}(\hat\rho)\,\big]\in\mathbb{R}^{N_d\times 2N_d},\qquad N_d=N_xN_y.
\]

Consequently, the linear system \(A(\hat \rho) \hat u = b\) is consistent, and the pseudoinverse solution \(\hat u^*=A^\dagger(\hat\rho)\,b\) yields an exact solution.
}
\end{lemma}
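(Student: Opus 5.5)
The plan is to reduce the claim to a statement about the column spaces of the difference operators alone, and then to verify it with a spectral (Fourier) argument in the periodic case and a left-null-space argument in the Neumann case. First note that by \eqref{eq: rho def} every entry of \(\hat\rho\) is a finite sum of strictly positive Gaussian values. Hence \(\mathrm{Diag}(\hat\rho)\) is invertible, and
\[
\mathrm{Im}\,A(\hat\rho)=\mathrm{Im}\big(D_x\mathrm{Diag}(\hat\rho)\big)+\mathrm{Im}\big(D_y\mathrm{Diag}(\hat\rho)\big)=\mathrm{Im}\,D_x+\mathrm{Im}\,D_y .
\]
So it suffices to show that \(B\hat\rho\in \mathrm{Im}\,D_x+\mathrm{Im}\,D_y\). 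Equivalently, every \(y\) with \(D_x^Ty=D_y^Ty=0\) must satisfy \(y^TB\hat\rho=0\). Once consistency holds, \(AA^\dagger\) is the orthogonal projector onto \(\mathrm{Im}\,A\), so \(AA^\dagger b=b\) and \(\hat u^*=A^\dagger b\) is an exact solution. This last step is standard.

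\emph{Periodic case.} Write \(D_x=I_{N_y}\otimes D\) and \(D_y=D'\otimes I_{N_x}\), where \(D\) and \(D'\) are the 1D circulant central-difference matrices \((v_{j+1}-v_{j-1})/(2l)\). Likewise \(B=I\otimes L+L'\otimes I\), with \(L\) the circulant second difference. All of these matrices are simultaneously diagonalized by the 2D DFT. The eigenvalue of \(D\) on mode \(k\) is \(\mathrm{i}\sin(2\pi k/N_x)/l\). It vanishes only for \(k=0\) or \(k=N_x/2\), and the second option is excluded because \(N_x\) is odd. Hence \(\mathrm{Im}\,D_x\) consists of all modes \((k_x,k_y)\) with \(k_x\neq0\), and \(\mathrm{Im}\,D_y\) of all modes with \(k_y\neq 0\). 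Their sum is every mode except \((0,0)\), i.e.\ exactly the zero-mean vectors \(\{v:\mathbf 1^Tv=0\}\). Since \(B\) is circulant with zero row sums, and symmetric, \(\mathbf 1^TB=0\), so \(\mathbf 1^TB\hat\rho=0\), which proves the claim. I would add a remark that for even sizes the checkerboard mode \(k=N/2\) lies in \(\ker D^T\) but not in the left null space of \(B\), which is why odd sizes are required.

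\emph{Neumann case.} This is where I expect the main obstacle. With ghost-point reflection \(v_{-1}=v_1\) and \(v_{N}=v_{N-2}\), the 1D operator \(D\) has zero boundary rows and is no longer circulant or symmetric. The natural route is still to characterize \(\ker D_x^T\cap\ker D_y^T\) via the Kronecker structure. I would first compute \(\ker D^T\) in 1D explicitly: the equations \(D^Ty=0\) give a two-step recurrence that couples \(y_{j-1}\) and \(y_{j+1}\), splitting the indices into even and odd chains with boundary conditions at both ends. Because \(N\) is odd, both endpoints have the same parity. I would then identify the kernel as spanned by a single weighted vector \(w\), which I expect to have trapezoid-like weights \(\tfrac12\) at the boundary. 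Next I would check directly that this same \(w\) satisfies \(w^TL=0\) for the Neumann second-difference matrix; this is the discrete analogue of \(\int\partial_{xx}\rho=0\) under zero-flux conditions. The 2D kernel should then be \(w'\otimes w\), and \((w'\otimes w)^T B=0\) follows termwise from the Kronecker form of \(B\). If the 1D kernel computation turns out to produce a second, parity-dependent vector, I would test it against \(L\) directly. If it fails, I would fall back on verifying the identity \(\mathrm{Im}\,A\ni B\hat\rho\) by exhibiting an explicit preimage: integrate the Laplacian once along \(x\) through a discrete cumulative-sum construction, using odd \(N_x\) to close the alternating-index recursion.
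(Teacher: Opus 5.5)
Your periodic argument is correct and complete. It follows the same skeleton as the paper's proof: identify \(\ker A(\hat\rho)^\top=\ker D_x^\top\cap\ker D_y^\top\) with \(\mathrm{span}\{\mathbf 1\}\), then use \(\mathbf 1^\top B=0\). The only difference is that you read the kernel off the DFT symbol \(\mathrm{i}\sin(2\pi k/N)/l\), whereas the paper observes that \(i\mapsto i+2\) cycles through all indices when \(N\) is odd. Your version has the small advantage of exhibiting the checkerboard obstruction for even \(N\) explicitly.

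The Neumann branch, however, has a genuine gap, and it cannot be closed.

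\begin{itemize}
\item \emph{The one-dimensional kernel is not what you expect.} With the ghost-point stencil you specify, rows \(0\) and \(N-1\) of \(D\) vanish, so \(e_0,e_{N-1}\in\ker D^\top\) trivially. The remaining equations read \(y_1=y_2=y_{N-3}=y_{N-2}=0\) and \(y_{c-1}=y_{c+1}\) in between, which forces \(y_1=\dots=y_{N-2}=0\). Hence \(\ker D^\top=\mathrm{span}\{e_0,e_{N-1}\}\) for every \(N\), and parity plays no role. Your trapezoid vector is not in this kernel, since \((D^\top w)_0=-w_1/(2l)\neq 0\).
\item \emph{The two-dimensional image is too small.} In 2D, \(\ker A^\top\) is spanned by the four corner indicators. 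Equivalently, every vector in \(\mathrm{Im}\,A\) vanishes at the four corner nodes: \(D_x\) has zero rows at all nodes on the left and right boundaries, and \(D_y\) at all nodes on the top and bottom boundaries.
\item \emph{The diffusion term does not vanish at the corners.} The ghost-point Laplacian gives \((B\hat\rho)_{0,0}=2(\hat\rho_{1,0}+\hat\rho_{0,1}-2\hat\rho_{0,0})/l^2\). This is nonzero for a generic sum of Gaussians; for instance, it is strictly positive for a single Gaussian centred in the domain.
\end{itemize}

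So \(b\notin\mathrm{Im}\,A\). Your test of the kernel vectors against \(L\) fails, and no cumulative-sum construction can produce a preimage. For this discretization the Neumann half of the lemma is false, not merely harder.

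The paper's proof does not handle this case either.
\begin{itemize}
\item It asserts \(D_x^\top=-D_x\) under Neumann conditions, but for the ghost-point stencil \(D_{1,0}=-1/(2l)\) while \(D_{0,1}=0\).
\item It asserts \(\mathbf 1^\top B=0\), but the Neumann second-difference matrix has nonzero column sums near the boundary. Zero row sums give \(B\mathbf 1=0\), not \(\mathbf 1^\top B=0\), once \(B\) is not symmetric.
\item The skew-symmetric truncation that would give \(D^\top=-D\) (zero ghost values) has the alternating vector \((1,0,1,\dots,0,1)\), not \(\mathbf 1\), in its kernel, so the cycling argument fails there too.
\end{itemize}
What your argument, like the paper's, actually establishes is the periodic case, which is the one used in the simulations.
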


\begin{proof}
Let \(R=\mathrm{Diag}(\hat\rho)\) with \(\hat\rho_i>0\). Under periodic or homogeneous Neumann BCs, the central-difference matrices satisfy
\(
(D_x)^\top=-D_x,\ (D_y)^\top=-D_y.
\) Thus
\[
A^\top z=\begin{bmatrix}R(D_x)^\top z\\ R(D_y)^\top z\end{bmatrix}=0
\quad\Longleftrightarrow\quad
D_x z=0\ \text{and}\ D_y z=0,
\]
since \(R\) is invertible (positive and diagonal).

In 1D, \(D z=0\) means \(z_{i+1}=z_{i-1}\). With \(N\) odd, the map \(i\mapsto i+2\) cycles through all indices, so \(z\) has to be a constant. Applying this in both \(x\) and \(y\) directions yields \(z=c\,\mathbf 1\). Hence \(\ker A^\top=\mathrm{span}\{\mathbf 1\}\) and \(\mathrm{Im}\,A=(\ker A^\top)^\perp=\mathbf 1^\perp\).

For the 5-point Laplacian \(B\) with periodic or homogeneous Neumann BCs, each row sums to zero, so \(\mathbf 1^\top B=0\). Therefore \(b=-T\,B\hat\rho\) satisfies \(\mathbf 1^\top b=0\), i.e., \(b\in\mathbf 1^\perp=\mathrm{Im}\,A\). The system \(A(\hat \rho)\hat u=b\) is thus consistent but underdetermined.
\changed{The Moore–Penrose solution \(\hat u^*=A^\dagger (\hat \rho) b\) then yields the unique solution minimizing \(\| \hat u \|_2\).}
\end{proof}

\begin{remark}
If either $N_x$ or $N_y$ is even, central differences \(i\mapsto i+2\) cycles are broken, causing \((\ker A^\top)^\perp \neq \mathbf 1^\perp\). In this case, a different operator, such as the Finite Volume Method, can be used to find the same solvability result regardless of \(N_x, Ny\). However, since the controller is designed for lightweight online applications, the simplest fix is adjusting the grid size, rather than implementing a more complex scheme. 
\end{remark}

\changed{
\begin{remark} \label{remark: boundary conditions}
    Periodic and Neumann boundary conditions are used in Lemma~\ref{lem:pseudoinverse} for mass preservation to ensure well-posedness of the discretized density dynamics. Periodic boundary conditions are implemented in this work to simplify the FD stencil and avoid boundary-specific complications. They do not imply that robots physically wrap around the workspace. When an explicit workspace boundary exists, Neumann boundary conditions provide a physically meaningful reflective model.

    In practice, the target and invariance sets are chosen strictly inside the computational domain, which can always be inflated without altering the control formulation. As a result, the swarm density mass rarely interacts with the domain boundary during operation, and the closed-loop behavior is identical between periodic and Neumann boundary conditions. 

    Dirichlet boundary conditions, on the other hand, conflict with the swarm density mass conservation and are therefore not considered. If it must be imposed, a finite-volume discretization may be employed at the cost of increased computational complexity.  
\end{remark}
}

\begin{proposition}\label{prop1: control field feasible}
    The constrained optimization problem \eqref{eq:OptimizationBased_discretized} is always feasible. Its unique solution \(\hat{u}^*\) guarantees the forward invariance of the set $\{\hat \rho : h(\hat \rho)\ge0\}$.
\end{proposition}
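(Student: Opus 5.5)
Feasibility comes from an explicit candidate point, uniqueness from strict convexity, and invariance from the standard comparison argument for CBFs.

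\textbf{Feasibility.} I would take the candidate \(\hat u_0 = -A^\dagger(\hat\rho)\,T B\hat\rho\). By Lemma~\ref{lem:pseudoinverse}, the vector \(b=-TB\hat\rho\) lies in \(\mathrm{Im}\,A(\hat\rho)\), so \(A(\hat\rho)\hat u_0 = -TB\hat\rho\) holds exactly. Hence \(\hat\rho_t = A(\hat\rho)\hat u_0 + TB\hat\rho = 0\): this control freezes the discretized density for one step. The box constraint is satisfied by construction, since \([\underline u,\overline u]\) was chosen to contain the component-wise range of \(-A^\dagger TB\hat\rho\).

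With \(\hat\rho_t=0\), the CBF constraint reduces to \(\alpha_h\bigl(\epsilon-\sum_{i\in\mathcal A}(\hat\rho^i)^2 l^2\bigr)=\alpha_h h(\hat\rho)\ge 0\). This holds whenever the current state lies in the safe set \(\{h\ge0\}\). The CLF constraint is always satisfiable by choosing the slack \(s\) large enough, for instance \(s=\frac{\alpha_v}{2}\sum_i(\hat\rho_d^i-\hat\rho^i)^2l^2\ge0\). So \((\hat u_0,s)\) is feasible for every \(\hat\rho\) in the safe set. The precise claim I would prove is feasibility at every safe state, and the invariance argument below keeps the trajectory in that set, so this is enough along trajectories.

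\textbf{Uniqueness.} The feasible set is an intersection of half-spaces and a box, so it is a nonempty closed convex polyhedron. The cost is \(\hat u^\top\hat u\,l^2+\gamma s\), which is strictly convex in \(\hat u\) and linear in \(s\).

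1. Existence of a minimizer: the cost is bounded below on the feasible set, since \(s\ge0\) follows from the CLF row being nonnegative at the optimum, or from the stated \(s\ge0\). Its sublevel sets are bounded in \(\hat u\), and \(s\) is pushed to its lower bound. So a minimizer exists.
2. Uniqueness of \(\hat u^*\): this follows from strict convexity in \(\hat u\).
3. Uniqueness of \(s\): for fixed \(\hat u^*\), the optimal \(s\) is \(\max\{0,\ \text{CLF expression}\}\), which is also unique.

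\textbf{Forward invariance.} Along the closed-loop ODE \(\dot{\hat\rho}=A(\hat\rho)\hat u^*+TB\hat\rho\), the discrete barrier \(\hat h(\hat\rho)=\epsilon-\sum_{i\in\mathcal A}(\hat\rho^i)^2l^2\) satisfies \(\dot{\hat h}=-\sum_{i\in\mathcal A}2\hat\rho^i\hat\rho_t^i l^2\). The CBF constraint therefore reads \(\dot{\hat h}\ge-\alpha_h\hat h\). By the comparison lemma, \(\hat h(t)\ge \hat h(t_0)e^{-\alpha_h(t-t_0)}\ge0\) whenever \(\hat h(t_0)\ge0\). Since feasibility is preserved at every safe state, the controller stays well defined along the whole trajectory.

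\textbf{Main obstacle.} The delicate step is regularity of the closed-loop vector field. The comparison lemma needs existence and uniqueness of solutions, hence local Lipschitz continuity of \(\hat\rho\mapsto\hat u^*(\hat\rho)\). I would argue this via standard QP sensitivity results: the constraints are affine in \((\hat u,s)\) with coefficients smooth in \(\hat\rho\), and the cost is strictly convex. One assumes a constraint qualification, which the strictly feasible choice \(s\) large provides for the CLF row. The CBF row also needs one; if \(\hat h>0\), the point \(\hat u_0\) is strictly feasible for it. On the boundary \(\hat h=0\) this is only in the weaker sense.

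If Lipschitzness cannot be fully established, I would fall back to a sampled-data argument. Under zero-order hold on \([t_k,t_{k+1})\), one bounds \(\hat h(t_{k+1})\) using the discrete CBF inequality, since the paper solves the problem at each timestep \(t_k\) anyway.
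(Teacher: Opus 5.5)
Your proposal is correct and follows the paper's proof: the pseudoinverse control from Lemma~\ref{lem:pseudoinverse} gives \(\hat\rho_t=0\), so the CBF row reduces to \(\alpha_h h(\hat\rho)\ge0\) at any safe state, the slack \(s\) makes the CLF row feasible, and strict convexity in \(\hat u\) gives a unique \(\hat u^*\). Your extra care (noting the cost is only linear in \(s\), checking the box constraint, and spelling out the comparison-lemma step and its Lipschitz requirement, which the paper leaves implicit) is sound; just drop the claim that the CLF row is nonnegative at the optimum, which is not true in general, and rely on \(s\ge0\) or the box on \(\hat u\) for boundedness of the cost.
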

\begin{proof}
\changed{For any \(h(\hat{\rho})\geq 0\), Lemma~\ref{lem:pseudoinverse} provides an exact solution that yields \(\hat\rho_t=0\), where the controlled advection term cancels out diffusion exactly.
Consequently, \(\dot{h} = 0\) and the CBF constraint is satisfied, ensuring forward invariance of the set $\{\hat \rho: h(\hat \rho)\ge0\}$.} Moreover, the CLF constraint (first constraint in \eqref{eq:OptimizationBased_discretized}) is relaxed by \(s\) and is therefore always feasible. Finally, as the optimization cost is strictly convex, the optimal solution \(\hat{u}^*\) is unique.
\end{proof}

\begin{remark}
    \changed{Since the safety CBF is enforced as a hard constraint in \eqref{eq:OptimizationBased_discretized}, the CLF objective may be relaxed to maintain invariance, potentially leading $\hat\rho$ to converge to a local minimum.} This is chosen as the solution to safety-critical applications. Thanks to the discretization proposed in \eqref{eq:OptimizationBased_discretized}, such local minima can be characterized using the techniques developed in \cite{reis2020control} for ODEs.
\end{remark}

\begin{figure}
    \centering
    \includegraphics[width=0.65\linewidth]{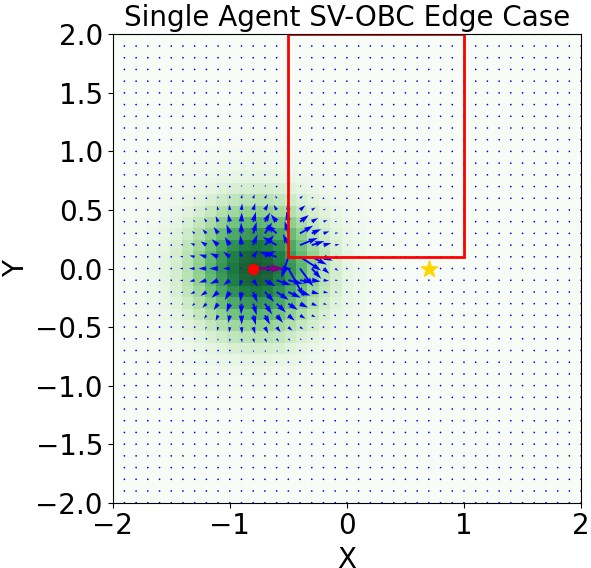}
    \caption{SV-OBC Edge Case. The red area represents the unsafe area, the yellow star is the target center, and the red dot marks the robot’s position with the green PDF. The control field is shown in blue, with purple as the interpolated command for robot.}
    \label{fig: field control edgecase}
\end{figure}

While SV-OBC \eqref{eq:OptimizationBased_discretized} yields a feasible solution that ensures the safety of the swarm density \(\hat{\rho}\) by Proposition~\ref{prop1: control field feasible}, an \changed{important} edge case exists. As shown in Fig.~\ref{fig: field control edgecase}, sufficiently large localization errors can generate conflicting control \changed{values within a robot's belief density. Since the PDF represents localization uncertainty, and the robots cannot execute the full spatial control field, a single control command is interpolated at their most probable location: the center of the PDF. As a result, the control values at the edge of the PDF may not be realized in execution, potentially leading to robot-level unsafe motion despite swarm-level invariance. This behavior does not contradict the swarm-level safety guarantee, but instead reveals a gap between density-level invariance and robot-level execution under localization uncertainty.}

Attempts to minimize this \changed{mismatch} have been made, with additional constraints on the covariance matrix of PDFs and enforcing a \changed{locally} uniform control field within a robot's PDF. Neither of these attempts \changed{proved} satisfactory, even without the additional time complexity; the covariance matrix does not maintain the Gaussian \changed{structure of the belief density}, and the uniform control field fails when multiple PDFs overlap.

\subsection{Robot Velocity Optimization-Based Control (RV-OBC)}
\label{subsec: sum of field}
To avoid the aforementioned drawback, we reformulate our control to \changed{avoid} the OC idea of a single \changed{global} control field applied to all robots, \(u_\mathrm{field}\) in \eqref{eq:cost_function}, despite its efficiency in scalability with the number of agents. Instead, \changed{to more accurately capture the robot-level execution under localization uncertainty,} each agent is assigned an individual velocity control input \(u_i \in \mathbb{R}^2\) that defines a uniform control field on a per-robot basis. 
This gives the optimization variable \(u = [u_1, u_2, \ldots, u_N]^T\) for \(N\) agents with the following RV-OBC formulation:
\begin{equation}
    \begin{aligned}
    \min_{u,\,s} &\quad \|u\|^2 + \gamma s\\
    \quad \text{s.t.} &\quad \alpha_v V(\rho) + \dot{V}(\rho,u) - s\leq 0\\
    &\quad \alpha_h h(\rho) + \dot{h}(\rho,u) \geq 0.
\end{aligned}
\label{eq:OptimizationBased_sum_fields}
\end{equation}
The main difference between RV-OBC \eqref{eq:OptimizationBased_sum_fields} and SV-OBC \eqref{eq:OptimizationBased} is that \eqref{eq:OptimizationBased_sum_fields} calculates its density evolution in a distributed sense, rather than one velocity field for all agents: 
\begin{equation}
    \frac{\partial \rho(r, t)}{\partial t} = \sum_{i=1}^{N} \frac{\partial \rho_i(r, t)}{\partial t} = \sum_{i=1}^{N} \bigg[ - u_i(r) \nabla  \rho_i(r, t)  + T_i \Delta \rho_i(r, t) \bigg].
    \label{eq:FP_equation_distributed}
\end{equation}

\begin{remark}
    As stated at the beginning of Section~\ref{sec3: math formulation}, the Fokker-Planck equation \eqref{eq:FP_equation} contains a diffusion coefficient \(T\) representing the random motion noise of a robot. We define a constant \(T = T_i = 0.3^2 c u_{\max}/ 2 = 0.045 c u_{\max}\), \changed{corresponding to a zero-mean Gaussian distribution whose probability mass is conservatively bounded such that \(99\%\) lies in \([-cu_{\max}, cu_{\max}]\), where the maximum admissible command \(u_{\max}\) is scaled by a constant \(c\) determined by physical hardware limits. Events outside this \(99 \%\) tail are considered physically impossible in practice. This bound is introduced as a modeling assumption to ensure the well-posedness of the continuous PDE formulation, and reflects the bounded nature of robot actuation within hardware limits, rather than a probabilistic safety guarantee over time.} 
    A variable \(T(u) = 0.045 u\) can more accurately represent the command-dependent motion noise. We decided to use a constant maximum noise for two reasons: it is computationally cheaper. It captures other forces, such as inertia in a quadcopter, which persists even if the command is zero. 
\end{remark}

\begin{proposition}
\label{prop:backstep}
Let $\mathcal{A}\subset \Omega$ and consider isotropic Gaussian \eqref{eq: rho def} centered at robot position \(x_i\) with identical \changed{inverse covariance (precision) matrices} \(\Sigma = \lambda I \) and swarm PDF \(\rho = \sum_i^N \rho_i \).  
The swarm density evolves by the Fokker-Planck equation \eqref{eq:FP_equation_distributed} with the barrier function \(h\) as defined in system \eqref{eq:OptimizationBased_sum_fields}.

We show the safety conditions in a worst-case situation, namely, at time $t_{k-1}$, all agents apply the maximum control $u_i^* = u_{\max} \hat n_i$ with
\(
n_i = \arg\min_{\|u\|= u_{\max}} -2\int_A \rho\,u^\top \nabla \rho_i\,dr,
\)
as the solution to \eqref{eq:OptimizationBased_sum_fields} with saturated CBF, \(\alpha_h h_{tk-1} + \dot h_{tk-1} = 0\).
Then, by the motion noise definition, \(T = 0.045 c u_{\max}\), an upper bound for the worst-case noisy motion is
\(
\delta_i = (1+c) \Delta t\,u_{\max}\hat n_i = \kappa \hat n_i,
\)
with timestep size \(\Delta t\).

We denote \(e_i = r_i - x_i\), and the moments at time $t_{k-1}$:
\begin{equation}
    M_{0,i} := \int_A \rho\rho_i dr, \,
M_{1,i} := \int_A \rho \rho_i \hat n_i^T  e_idr,  \,
M_{2,i} := \int_A \rho \rho_i \|e_i\|^2 dr,
\end{equation}
and the aggregates
\begin{equation}
M_0 := \sum_i M_{0,i}, 
\qquad M_1 := \sum_i M_{1,i},
\qquad M_2 := \sum_i M_{2,i}.
\end{equation}
Then, we can define \(\beta_i := \exp\!\big(\lambda e_i^T \delta_i - \tfrac12 \delta_i^T \lambda \delta_i \big),\) and a weighted average
\begin{equation}
    \beta_{0, i,j} :=     \frac{ \int_A \beta_i(r)\,\rho_i(r)\,\beta_j(r) \rho_j(r)\,dr}
         { \int_A \rho_i(r)\,\rho_j(r)\,dr},
\end{equation}
with a single agent swarm level notation: 
\begin{equation}
    \bar\beta_{0,i}
    := \sum_j^N 
    \frac{ \int_A \beta_j(r)\,\rho_j(r)\,\beta_i(r)\rho_i(r)\,dr}
         { \int_A \rho_j(r)\,\rho_i(r)\,dr},
         \label{eq: beta bar def}
\end{equation}
with the denominator sums to \(M_{0,i}\), and aggregates to \(\bar \beta_0 = \sum_i \bar \beta_{0,i}\). Similarly, we define \(\bar \beta_{1,i}, \bar \beta_1, \bar \beta_{2,i}, \bar \beta_2\) as the weighted average with \(M_{1,i}, M_1, M_{2,i}, M_2\), respectively. 

Then, a sufficient condition for the CBF condition
\[
\alpha_h h_{tk} + \dot h_{tk} \;\ge\; 0
\]
to hold with a recovery command \(u_{i,tk} = - u_{\max} \hat n_i\) is:
\begin{equation}
    \begin{aligned}
        &(\bar \beta_1 +1) u_{\max} M_1 \geq \, T \lambda (\bar \beta_2 - 1)M_{2}- 2 T \bar \beta_1 \lambda \kappa M_1\\
        &\quad \quad \quad + \Big[\bar \beta_0 \big[T \lambda \kappa^2 + u_{\max}  \kappa - 2T + \frac{\alpha_h}{2 \lambda}\big] + 2 T - \frac{\alpha_h}{2 \lambda} \Big] M_0
    \end{aligned}
\label{eq: safety final bound}
\end{equation}
\end{proposition}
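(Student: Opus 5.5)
The plan is to evaluate the CBF expression at $t_k$ directly in terms of quantities frozen at $t_{k-1}$. Each Gaussian kernel is shifted by the worst-case displacement, and the shifted kernel is rewritten as a multiplicative correction of the old one. Since $\Sigma=\lambda I$, a robot moved from $x_i$ to $x_i+\delta_i$ has
\[
\rho_i^{k}(r)=\exp\!\big(-\tfrac{\lambda}{2}\|e_i-\delta_i\|^2\big)=\beta_i(r)\,\rho_i(r),
\]
with $\rho_i$ and $e_i=r-x_i$ taken at $t_{k-1}$. This is exactly why $\beta_i$ is defined as in the statement. The bound $\|\delta_i\|\le\kappa$ follows from the motion-noise remark: the drift contributes $\Delta t\,u_{\max}$, and the $99\%$ noise bound contributes at most $c\,u_{\max}\Delta t$. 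The worst-case direction is $\hat n_i$ by the definition of $n_i$. Consequently $\rho^k=\sum_j\beta_j\rho_j$, and every integral over $\mathcal A$ of products $\rho^k\rho_i^k$ becomes $\sum_j\int_{\mathcal A}\beta_j\rho_j\beta_i\rho_i$ weighted by $1$, $\hat n_i^\top e_i$ or $\|e_i\|^2$. Dividing by the unshifted counterparts gives the weighted averages $\bar\beta_{0},\bar\beta_1,\bar\beta_2$, which multiply $M_0,M_1,M_2$.

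Next we expand $\alpha_h h_{t_k}+\dot h_{t_k}$ with the recovery command $u_{i,t_k}=-u_{\max}\hat n_i$, using \eqref{eq:FP_equation_distributed}. The derivative is
\[
\dot h=-2\int_{\mathcal A}\rho\sum_i\big[-u_i^\top\nabla\rho_i+T\Delta\rho_i\big]dr,
\]
with explicit Gaussian derivatives
\[
\nabla\rho_i=-\lambda e_i\rho_i,\qquad \Delta\rho_i=(\lambda^2\|e_i\|^2-2\lambda)\rho_i .
\]
At time $t_k$ the centered coordinate is $e_i-\delta_i$, so the following expansions arise:
\[
\hat n_i^\top(e_i-\delta_i)=\hat n_i^\top e_i-\kappa,\qquad \|e_i-\delta_i\|^2=\|e_i\|^2-2\kappa\,\hat n_i^\top e_i+\kappa^2 .
\]
They produce the cross terms $u_{\max}\kappa\bar\beta_0M_0$, $-2T\lambda\kappa\bar\beta_1M_1$ and $T\lambda\kappa^2\bar\beta_0M_0$ appearing in \eqref{eq: safety final bound}. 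The barrier value $h_{t_k}=\epsilon-\int_{\mathcal A}(\rho^k)^2$ is handled through the saturation hypothesis at $t_{k-1}$, namely $\alpha_h h_{t_{k-1}}+\dot h_{t_{k-1}}=0$. This identity lets us replace $\alpha_h\epsilon$ by an expression in $M_0,M_1,M_2$ built from the $t_{k-1}$ derivative with $u_i=+u_{\max}\hat n_i$. Subtracting it explains the "$+1$" and "$-1$" pairings $(\bar\beta_1+1)$, $(\bar\beta_2-1)$ and $2T-\alpha_h/(2\lambda)$ against $\bar\beta_0[\cdots]$. The $\alpha_h/(2\lambda)$ factor arises after normalizing the whole inequality by $2\lambda$.

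The final step is to collect the terms by moment, move everything except the $u_{\max}M_1$ terms to the right-hand side, and divide by the positive constant $2\lambda$. The result is the sufficient condition \eqref{eq: safety final bound}. Sufficiency, rather than equivalence, enters wherever an inequality replaces an equality: $\|\delta_i\|\le\kappa$, the worst-case alignment of the noise with $\hat n_i$, and the $\int\rho^2$ term in $h$, which is approximated or bounded using the products of $\rho_i\rho_j$ already encoded in $M_{0}$.

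I expect the main obstacle to be the cross-robot coupling. $\rho^k\rho_i^k$ contains $\beta_j\beta_i\rho_j\rho_i$ with different $j$. Its $e_i$-weighted versions cannot be pulled out exactly, so the definition of $\bar\beta_{1,i}$ and $\bar\beta_{2,i}$ as ratios over $M_{1,i}$ and $M_{2,i}$ must be used carefully. This matters especially when $M_{1,i}$ may change sign, since the direction of an inequality then depends on that sign. My first approach would be to define these averages so that the identities hold exactly by construction, as the statement suggests. If that fails, I would assume $M_1>0$, which holds in the worst case near $\partial\mathcal A$, so that all manipulations are monotone.
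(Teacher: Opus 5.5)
Your proposal is correct and follows the paper's proof essentially step for step. The paper also subtracts the saturated identity $\alpha_h h_{t_{k-1}}+\dot h_{t_{k-1}}=0$, reducing the goal to $\alpha_h(h_{t_k}-h_{t_{k-1}})+(\dot h_{t_k}-\dot h_{t_{k-1}})\ge 0$ (which is the same as your elimination of $\alpha_h\epsilon$); it then writes $\rho_{i,t_k}=\beta_i\rho_{i,t_{k-1}}$, expands the advection and diffusion terms in $e_i-\delta_i$ with $\delta_i=\kappa\hat n_i$, absorbs the $\beta$-factors into $\bar\beta_0,\bar\beta_1,\bar\beta_2$ by construction, and divides by $2\lambda$. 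One correction: nothing in the barrier term is approximated, since $\int_{\mathcal A}\rho_{t_{k-1}}^2=M_0$ and $\int_{\mathcal A}\rho_{t_k}^2=\bar\beta_0M_0$ exactly, so under the stated worst case every step is an equality and ``sufficient'' refers only to that worst-case modeling; likewise no sign assumption on $M_1$ is needed, because the only division with an inequality is by $2\lambda>0$.
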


\begin{proof}
Since the swarm is moving towards \(\mathcal{A}\) with a saturated CBF at the previous timestep, we have:
\begin{equation}
    \alpha_h(h_{tk}-h_{tk-1}) + (\dot h_{tk} - \dot h_{tk-1}) \geq 0,
\label{eq: cbf goal prop}
\end{equation}
as the equivalent inequality goal for safety. 
By the identities in \cite{RalPaper}:
\begin{equation}
    \begin{aligned}
        \nabla\rho &= -\rho \Sigma (r - x) \\
        T \Delta \rho &= T \Bigg[\frac{\partial^2}{\partial r_1 ^2} \rho + \frac{\partial^2}{\partial r_2 ^2} \rho \Bigg]\\
        &= T \rho\big[ (r - x)^T \Sigma^2 (r - x) - \text{Tr}(\Sigma) \big],
    \end{aligned}
\end{equation}
the Fokker-Planck PDE at timestep \(t_k\) can be expressed as:
\begin{equation}
    \begin{aligned}
    \partial_t \rho_i = \rho_i(u_i^T \Sigma e_i + T(\lambda^2 e_i^T e_i - 2 \lambda)).
    \end{aligned}
\end{equation}
This yields a swarm-level advection gain, with recovery commands \(u_{i,tk} = - u_i^* \):
\begin{equation}
    \dot h_{tk}^{adv} = \sum_i^N -2 \int_\mathcal{A} \rho_{tk} \, \rho_{i,tk} \, (-u_{\max} \hat n_i)^T \Sigma (e_i - \delta_i) \, dr,
\end{equation}
and a diffusion loss:
\begin{equation}
    \dot h_{tk}^{diff} = \sum_i^N -2 T \int_\mathcal{A} \rho_{tk} \, \rho_{i,tk} \, (\lambda^2 (e_i - \delta_i)^T (e_i - \delta_i) - 2 \lambda) \, dr.
\end{equation}
Notably, as shown in Fig.~\ref{fig:BlockDiagram Optimization}, the PDF resets at each timestep, with new robot centers at \(x_i + \delta_i\), giving: 
\[
\rho_{i,tk} := \exp(- \frac{1}{2} (e_i - \delta_i)^T \Sigma (e_i - \delta_i)) = \rho_{i,tk-1} \, \beta_i.
\]
Further, by definition \eqref{eq: beta bar def}, we have 
\[
\int_\mathcal{A} \rho_{tk} \, \rho_{i,tk} \, dr = \bar \beta_{0,i} \int_\mathcal{A} \rho_{tk-1} \, \rho_{i, tk-1} \, dr = \bar \beta_{0,i} M_{0,i}
\]

Under the above definitions, dropping the timestep subscript as all functions are now in \(t_{k-1}\), the following advection gain for \(\dot h_{tk} - \dot h_{tk-1}\) is found:
\begin{equation}
    \begin{aligned}
       &= \sum_i^N 2 (\bar \beta_{1,i} + 1) \int_\mathcal{A} \rho \, \rho_{i} \, (u_{\max} \hat n_i)^T \Sigma e_i \, dr - 2 \bar \beta_{0,i} \int_\mathcal{A}\rho \, \rho_{i} \, (u_i^*)^T \Sigma \delta_i \, dr \\
       &=  2 (\bar \beta_1 +1) \lambda u_{\max} M_1 - 2 \bar \beta_0 u_{\max} \lambda \kappa M_0.
    \end{aligned}
\end{equation} 
Similarly, the diffusion penalty is:
\begin{equation}
    \begin{aligned}
        &= \sum_i^N -2 T \int_\mathcal{A} \rho (\sum_i^N \beta_i) \, \rho_{i} \beta_i\, (\lambda^2 (e_i - \delta_i)^T (e_i - \delta_i) - 2 \lambda) \, dr \\
        & \quad + 2 T \int_\mathcal{A} \rho \, \rho_{i} \, (\lambda^2 e_i^T e_i - 2 \lambda) \, dr \\
        &= -2T \lambda \Big[\lambda (\bar \beta_2 - 1)M_{2} - 2 (\bar \beta_0 - 1)M_{0} - 2 \bar \beta_1 \lambda \kappa M_1 + \bar \beta_0 \lambda \kappa^2 M_{0}  \Big].
    \end{aligned}
\end{equation}
Finally, the penalty on the barrier function itself is: 
\begin{equation}
    \begin{aligned}
        &= -\alpha_h \int_\mathcal{A} \rho_{tk}^2 - \rho_{tk-1}^2 \, dr = -\alpha_h (\bar \beta_0 - 1) M_0.
    \end{aligned}
\end{equation}
Substituting these into \eqref{eq: cbf goal prop} yields \eqref{eq: safety final bound}.

\end{proof}

Proposition~\ref{prop:backstep} guarantees the safety of the swarm under the bound \eqref{eq: safety final bound}. We look at two special cases to further understand what this bound implies. 
First, we consider a zero motion noise situation, \(c = 0\). Then, \eqref{eq: safety final bound} simplifies to:
\begin{equation}
    \begin{aligned}
        (\bar \beta_1 +1) u_{\max} M_1
        \geq \, \Big[\bar \beta_0 u_{\max}  \kappa + (\bar \beta_0 - 1)  \frac{\alpha_h}{2 \lambda} \Big] M_0 \\
         \approx \frac{\alpha_h}{2 \lambda} \Delta t u_{\max} M_1 + \bar \beta_0 \Delta t u_{\max}^2 M_0\\
         (\bar \beta_1 +1 ) M_1 \geq   \frac{\alpha_h}{2 \lambda} \Delta t M_1 + \bar \beta_0\Delta t u_{\max} M_0,
    \end{aligned}
\end{equation}
using a first-order approximation in \(h_{tk} - h_{tk -1}\). This simplified bound states that, to ensure safety, the net gain in advection needs to be greater than the safety drop and a penalty term that depends on the displacement from the last timestep. As a result, with \(\Delta t \rightarrow 0\), the bound is always satisfied, and \(4 \lambda \geq \alpha_h \Delta t\) is a sufficient bound as \(u_{\max} \rightarrow 0\). As expected, with perfect motion, safety is easily guaranteed with smaller timestep for fast correction, smaller \(\alpha_h\) to detect CBF saturation earlier, and lower maximum command and localization noise to decrease the risk of safety violation.

Then, we find an approximation of \eqref{eq: safety final bound} to demonstrate the effects of motion noise in the system. Notice that all \(\bar \beta_i \rightarrow 1^+\) with small step approximations \(\delta_i \rightarrow 0^+\) from \(c, \Delta t,\) or \(u_{\max}\), this gives:
\begin{equation}
    \begin{aligned}
        2 \Big[1 + 0.045c \lambda \kappa \Big]M_1 &\geq         (1+c) \Delta t u_{\max} \big[1 +  0.045 c  \lambda \kappa \big] M_0\\
        2 M_1 &\geq (1+c) \Delta t u_{\max}  M_0.
    \end{aligned}
\end{equation}
This simplified bound captures each parameter's effects on safety. A higher maximum control, \(u_{\max}\), increases the RHS, showing that a faster agent is harder to keep safe. Conversely, a smaller timestep \(\Delta t\) reduces the RHS, indicating that quicker correction enforces safety. 
The geometry of \(\mathcal{A}\) and \changed{precision} \(\lambda\) are encoded in \(M_1, M_0\). As danger areas expand outwards, the weighted distance term \(\hat n_i^T e_i\) drives \(M_1 > M_0\), indicating that greater weighted distance from \(\mathcal{A}\) improves safety. A more accurate localization means \(\lambda\) amplifies both \(M_0, M_1\) by concentrating the underlying PDFs. Paradoxically, this implies that a more accurate localization can make the bound harder to satisfy, since \(x_i\) of a sharper PDF would be closer to \(\mathcal{A}\), and reduce safety faster by the same motion noise. This outcome contradicts the noise free analysis, yet is consistent with the relationship between motion and localization noises.
Finally, as also shown in the noise-free case, a smaller motion noise coefficient, \(c\), relaxes the bound, making it easier to enforce safety. 

The bound in Proposition~\ref{prop:backstep} was evaluated in both simulations and experiments, and in all scenarios, the inequality \eqref{eq: safety final bound} was satisfied. Violations were observed only under extreme, artificial edge cases in which nearly all robots, under heavy motion noise, simultaneously applied large maximum magnitude \(u_{\max}\) directed towards \(\mathcal{A}\) in close distance, with \(\Delta t \geq 1\)~s. Outside of these extreme scenarios, the bound \eqref{eq: safety final bound} was consistently held in practice.

\begin{proposition}
    The constrained optimization problem \eqref{eq:OptimizationBased_sum_fields} returns a unique optimal solution \(u^*\), given that the initial swarm distribution is safe and parameters can satisfy \eqref{eq: safety final bound}. This \(u^*\) ensures the forward invariance of the set \(\{ \rho : h(\rho)\ge0\}\).
\end{proposition}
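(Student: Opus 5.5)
The plan is to mirror the structure of the proof of Proposition~\ref{prop1: control field feasible}, with Proposition~\ref{prop:backstep} supplying the feasibility certificate that Lemma~\ref{lem:pseudoinverse} provided in the SV-OBC case. There are three separate claims: existence of a feasible point at every timestep, uniqueness of the optimizer, and forward invariance of $\{\rho : h(\rho)\ge 0\}$.

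\textbf{Uniqueness.} After discretization, the cost $\|u\|^2+\gamma s$ is strictly convex in $u\in\mathbb{R}^{2N}$ and linear in $s$. The constraints are affine in $(u,s)$, because $\dot V$ and $\dot h$ depend linearly on each $u_i$ through $-u_i^\top\nabla\rho_i$ in \eqref{eq:FP_equation_distributed}.
\begin{itemize}
\item If the feasible set is nonempty, the $u$-component of the minimizer is unique.
\item For fixed $u$, the optimal slack is the smallest value allowed by the CLF constraint, namely $s=\max\{0,\alpha_v V+\dot V\}$, so $s$ is also unique.
\item Existence follows because the objective is coercive in $u$ and bounded below over the closed feasible set.
\end{itemize}
Uniqueness therefore reduces to feasibility.

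\textbf{Feasibility, by induction on timesteps.} The CLF constraint can always be met by taking $s$ large enough, so only the CBF constraint matters.
\begin{itemize}
\item At $t_0$ the distribution is safe by assumption, $h(\rho_{t_0})\ge 0$.
\item Suppose at $t_{k-1}$ the CBF constraint is feasible and $h\ge0$. Split on whether it is inactive or saturated at the optimizer.
\item If $\alpha_h h_{t_{k-1}}+\dot h_{t_{k-1}}>0$ strictly, then in the discrete-time sense $h$ cannot cross zero within one step, so $h_{t_k}\ge0$. At $t_k$, any $u$ with $\dot h\ge0$ is feasible. I would try to build one, for instance $u=-u_{\max}\hat n_i$, or use the fact that the worst case is the saturated one.
\item In the saturated case, with all agents at $u_{\max}\hat n_i$ and noisy displacement bounded by $\delta_i$, the hypotheses of Proposition~\ref{prop:backstep} hold exactly. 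Since the parameters satisfy \eqref{eq: safety final bound}, the recovery command $u_{i,t_k}=-u_{\max}\hat n_i$ satisfies $\alpha_h h_{t_k}+\dot h_{t_k}\ge0$. It is therefore a feasible point of \eqref{eq:OptimizationBased_sum_fields} at $t_k$.
\end{itemize}
The step I expect to be the main obstacle is the intermediate case, where the constraint is active for only some agents or the optimizer is not at $u_{\max}$. I would argue that the saturated, maximal-command configuration is the worst case. The displacement toward $\mathcal{A}$ is then maximal, the factors $\bar\beta$ are largest, and $M_1$ is smallest. Any other configuration makes the right-hand side of \eqref{eq: safety final bound} smaller and its left-hand side larger, by monotonicity in $\kappa$. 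If that monotonicity is hard to establish in full generality, I would state it as the modeling assumption already implicit in calling Proposition~\ref{prop:backstep} a ``worst case''.

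\textbf{Forward invariance.} This follows from the standard CBF comparison argument. Since $\dot h\ge-\alpha_h h$ holds at every timestep along the closed loop, the comparison lemma (or its discrete analogue over each step) gives
\[
h(\rho(t))\ge h(\rho(t_0))e^{-\alpha_h(t-t_0)}\ge0 .
\]
Combining this with the inductive feasibility gives invariance of $\{\rho:h(\rho)\ge0\}$ for all $t$, completing the proof.
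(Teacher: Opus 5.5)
Your proposal follows the paper's proof: Proposition~\ref{prop:backstep} supplies the feasibility certificate for the CBF constraint, applied recursively from the safe initial distribution; the slack $s$ makes the CLF constraint always feasible; and strict convexity in $u$ gives uniqueness (your point that the cost is only linear in $s$, with $s$ then fixed by the CLF constraint and $s\ge 0$, is more careful than the paper's one-line appeal to strict convexity). The paper does not separately treat the inactive or intermediate configurations, nor does it justify passing from per-step CBF feasibility to invariance under the sampled, PDF-resetting dynamics, so your comparison-lemma step and your claim that an inactive constraint keeps $h_{t_k}\ge 0$ are no more rigorous than its word ``recursively''; instead it folds all of this into the worst-case framing of Proposition~\ref{prop:backstep}, which is exactly the modeling assumption you propose to fall back on.
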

\begin{proof}
    By Proposition~\ref{prop:backstep}, the CBF constraint (second constraint in \eqref{eq:OptimizationBased_sum_fields}) is feasible under bound \eqref{eq: safety final bound}. Moreover, it ensures swarm safety recursively, since the bound depends explicitly on the previous states. Additionally, the CLF constraint (first constraint in \eqref{eq:OptimizationBased_sum_fields}) is always feasible with the slack variable \(s\). Finally, as the optimization cost is strictly convex, the optimal solution \(u^*\) is unique.
\end{proof}

While this controller formulation ensures forward invariance under a safe initial distribution with conditions based on predefined parameters, the computation time increases drastically with the size of the swarm. \changed{This is because each additional agent requires computing an additional full spatial field, in addition to the two optimization variables.} To improve the computation speed, we propose the following Voronoi inspired OBC. 

\subsection{Robot Velocity Optimization-Based Control with Voronoi Partitions (RV-OBC-V)}
\label{subsec: RV-OBC-V}
Voronoi-based controllers have been studied for coverage control in the ODE domain thanks to their fast online computation capabilities \cite{doi:10.1177/0278364908100177,TERUEL201951, Elamvazhuthi_Berman_2019}.
In the below formulation, instead of accounting for an entire new field with each additional agent as seen in RV-OBC \eqref{eq:OptimizationBased_sum_fields}, a Voronoi partition step is added such that each agent only accounts for its density within its own cell. Specifically, we define the Voronoi cells centered at each agent position, \(x_i\), as a disjoint partition of the spatial field: \(\bigcup_{i=1}^N C_i = \Omega, \ \ C_i \cap C_j = \varnothing \;\; \forall i \neq j\). The resulting optimization-based controller is:
\begin{equation}
    \begin{aligned}
    \min_{u,\,s} &\quad \|u\|^2 + \gamma s\\
    \quad \text{s.t.} &\quad \alpha_v V_c(\rho) + \dot{V}_c(\rho,u) - s\leq 0\\
    &\quad \alpha_h h_c(\rho) + \dot{h}_c(\rho,u) \geq 0,
\end{aligned}
\label{eq: RV-OBC-V}
\end{equation}
with the Voronoi versions of the Lyapunov and barrier functions:
\begin{equation}
    V_C = \sum_i^N \frac{1}{2} \int_{C_i} (\rho_d - \rho_i)^2 \,dr \quad h_C = \epsilon - \sum_i^N \int_{C_i \bigcap \mathcal{A}}\rho_i^2\,dr,
\end{equation}
and their respective time derivatives. This new formulation, while avoiding the time and space scaling problem associated with increasing numbers of agents, does not account for densities outside of each agent's cell. In the remainder of this section, we derive an upper bound to quantify this overestimated amount. 

First, we find the difference in the Lyapunov measurement of target density matching:
\begin{equation}
    \begin{aligned}
        V_c - V &= \sum_i^N  \frac{1}{2} \int_{C_i} (\rho_d - \rho_i)^2 \,dr - \frac{1}{2} \int_\Omega (\rho_d - \sum_i^N \rho_i )^2 \ dr\\
        &= \frac{1}{2} \sum_i^N \int_{C_i} (\rho_d - \rho_i)^2 - (\rho_d - \sum_i^N \rho_i )^2 \ dr \\
        &= \frac{1}{2} \sum_i^N \int_{C_i} 2 \rho_d (\sum_i^N \rho_i - \rho_i)+ \rho_i^2 - (\sum_i^N \rho_i)^2  \ dr.
    \end{aligned}
    \label{eq: V-Vc first}
\end{equation}
The last two terms can be simplified as:
\begin{equation}
    \rho_i^2 - (\sum_i^N \rho_i)^2 = \rho_i^2 - (\sum_{k \neq i}^N \rho_k + \rho_i)^2 = -(\sum_{k\neq i}^N \rho_k)^2 - 2\rho_i \sum_{k\neq i}^N \rho_k,
\end{equation}
which means \eqref{eq: V-Vc first} becomes:
\begin{equation}
    \begin{aligned}
        V_c - V &= \frac{1}{2} \sum_i^N \int_{C_i} 2 \rho_d (\sum_{k\neq i}^N \rho_k) - (\sum_{k\neq i}^N \rho_k)^2 - 2\rho_i \sum_{k\neq i}^N \rho_k \ dr\\
        &= \frac{1}{2} \sum_i^N \int_{C_i} -(\sum_{k\neq i}^N \rho_k)^2 + 2(\rho_d - \rho_i) \sum_{k\neq i}^N \rho_k \ dr.
    \end{aligned}
    \label{eq: V-Vc main}
\end{equation}
Then, to find a bound on \eqref{eq: V-Vc main}, we first define a few parameters. Let \(C_i^c = \Omega \setminus C_i\) be the complement of each cell, and \(S_i = \int_{{C_i^c}} \rho_i \ dr\) be the mass of agent \(i\)'s density with \(s_i = \sup_{x \in C_i^c} \rho_i(x)\) as the upper density bound outside each cell. By Holder's and triangle inequality, the second term in \eqref{eq: V-Vc main} is bounded as:
\begin{equation}
\begin{aligned}
    \int_{C_i} 2(\rho_d - \rho_i) \sum_{k\neq i}^N \rho_k \ dr & \leq 2 \sup_{x \in C_i} \bigg|\sum_{k\neq i}^N \rho_k \bigg| \int_{C_i} \big| \rho_d - \rho_i \big| \ dr \\
    &\leq 2 \sum_{k \neq i}^N s_k \int_{C_i} \big| \rho_d - \rho_i \big| \ dr.
\end{aligned}
\end{equation}
Putting this together, \eqref{eq: V-Vc main} simplifies to:
\begin{equation}
    \begin{aligned}
        V_c - V &\leq \sum_i^N \bigg[ \sum_{k \neq i}^N s_k  \int_{C_i} \big| \rho_d - \rho_i \big| \ dr \bigg] \\
        &= \sum_i^N \bigg[ \sum_{k \neq i}^N s_k  E_i\bigg],
    \end{aligned}
    \label{eq: final V-Vc bound}
\end{equation}
where \(E_i = \int_{C_i} \big| \rho_d - \rho_i \big| \ dr\) represents the error in matching the target density within \(C_i\). 

Similarly, for the barrier function difference, we have:
\begin{equation}
    \begin{aligned}
        h_c - h &= - \sum_i^N \int_{C_i \bigcap \mathcal{A}} \rho_i^2 \, dr + \int_\mathcal{A} (\sum_i^N \rho_i)^2 \, dr \\
        &= \sum_i^N \int_{C_i \bigcap \mathcal{A}} (\sum_i^N \rho_i)^2 - \rho_i^2 \, dr \\
        &= \sum_i^N \int_{C_i \bigcap \mathcal{A}} (\sum_{k\neq i}^N \rho_k)^2 + 2 \rho_i (\sum_{k\neq i}^N \rho_k) \, dr\\
        & \leq \sum_i^N \bigg[ \sum_{k \neq i}^N s_k \sum_{k \neq i}^N S_k + 2 \sum_{k \neq i}^N s_k \int_{C_i \bigcap \mathcal{A}} \rho_i \, dr \bigg].
    \end{aligned}
    \label{eq: hc h ineq}
\end{equation}

\begin{remark} \label{Remark: hc - h tiny}
    It is important to note that both \eqref{eq: final V-Vc bound} and \eqref{eq: hc h ineq} have a loose bound that can be tightened significantly based on the size of \(C_i\)s, and \(\mathcal{A} \) with respect to \(\Omega\). The terms \(s_k, \ S_k\) can be further bounded within  \(C_i \bigcap C_k^c\), \(\mathcal{A} \bigcap C_k^c\) instead of \(C_k^c\). Furthermore, for \eqref{eq: hc h ineq}, many of the integral terms vanish in practice since \(\mathcal{A}\) is typically not covered by all Voronoi cells. As shown in Section~\ref{section:Simulation and Results}, the difference between the barrier functions is trivial compared to Lyapunov functions, and can be added as an additional \(\epsilon_c\) for the CBF constraint if needed. 
\end{remark}

Therefore, it is clear to see from \eqref{eq: final V-Vc bound} and \eqref{eq: hc h ineq} that the Voronoi cell approach converges towards the true measurements in \eqref{eq:OptimizationBased_sum_fields} as \(s_i\) and \(S_i\) decrease. Both of which are achieved by agents working together but further apart. Although the term \(E_i\) in \eqref{eq: final V-Vc bound} can increase with larger \(C_i\), it converges towards a constant bounded by the target and agent's total density mass in an additive fashion through the integral. Its multiplier, \(s_i\), decreases exponentially with respect to the increase of \(C_i\) and dominates \(E_i\) such that \eqref{eq: final V-Vc bound} converges to zero. At the extreme, where all agents' PDFs are disjoint, the Voronoi cell approach has the same values as \eqref{eq:OptimizationBased_sum_fields} while computing at a much faster rate. Additionally, this formulation also becomes more accurate with a smaller number of agents as a result of the summations in the bounds and sizes of \(C_i\)s. 

\begin{center}
    \vspace{0.5\baselineskip}
    \fbox{%
        \begin{minipage}{0.95\linewidth}
            We conclude that our proposed algorithm RV-OBC-V does scale with respect to the swarm size, with minimal overhead due to the computation of the Voronoi partition. Contrary to a common swarm stereotype, however, it works best with sparse swarms, \changed{where the lack of connectivity among the agents makes the satisfaction of invariance constraints more effective}. In other words, \textit{it is much safer to be sparse than connected}.
        \end{minipage}
    }
\end{center}

\section{Simulation Results and Analysis}
\label{section:Simulation and Results}
To test the proposed controller, we conducted simulations within \changed{a square domain with side length \(4\)~m,} discretized at a spatial resolution of \(0.1\)~m, resulting in a grid of \(41 \times 41 = 1681\) points. A periodic boundary condition is applied to improve the stability of OC. The target PDF is a Gaussian centered at \((0.5, -0.5)\), with a standard deviation of \(1.5\). Simulations were conducted over a total duration of \(4\)~s, with \(\Delta t =0.01\)~s. 

\subsection{Single-robot Study}
The comparison in the following is conducted using a single-robot setup to highlight the distinct behaviors of the two controllers, OC \eqref{eq:Optimal_System} and SV-OBC \eqref{eq:OptimizationBased_discretized}, with the simulation setup in Fig.~\ref{fig: Simulation Setup}. Multi-robot simulations are discussed in the next subsection to validate the controllers' performance in a team of robots. To replicate real-world noisy behavior, we implemented both localization and motion noise in all controllers according to \changed{Algorithm~\ref{alg: noise}}. To achieve statistically reliable results, the simulations were performed \(100\) times, and the averaged performance metrics were computed. 

\begin{figure}
    \centering
    \includegraphics[width=0.6\linewidth]{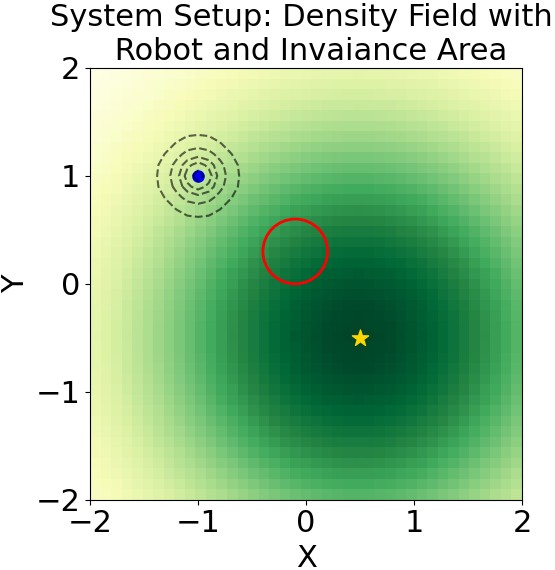}
    \caption{Simulation setup: the red circle represents the unsafe area, the blue dot marks the robot’s position with the black dashed contour as its PDF, and the yellow star is the target center with green PDF.}
    \label{fig: Simulation Setup}
\end{figure}

\begin{algorithm}
\caption{\enskip Simulation Flow}\label{alg1}
\begin{algorithmic}
  \State Initialize \(x_i^{true}, x_i^{meas}\) for each agent
  \For {each timestep $t_k$}
  \State draw \(x_i^{meas}\) from a Gaussian distribution around \(x_i^{true}\)
  \State compute \(u_i^*\) from \(x_i^{meas}\) using one of \eqref{eq:Optimal_System}, \eqref{eq:OptimizationBased_discretized}, \eqref{eq:OptimizationBased_sum_fields}, or \eqref{eq: RV-OBC-V}.
  \State update \(x_i^{true}\) with \(u_i^*\) with added Gaussian motion noise
  \EndFor
\end{algorithmic}
\label{alg: noise}
\end{algorithm}

An interesting aspect revealed by OC is the reshaping of the robot PDF under the state invariance constraint, defined as maintaining the total density within a circular region below a threshold. Though intuitively reasonable, this constraint led the controller to adopt the classic optimal transport behavior of allowing the PDF to flatten itself by diffusion to save control cost, only later reshaping it into the desired Gaussian-like form, seen by the spreading effect in the left column of Fig.~\ref{fig: Single Time Sequence}. As a result, the final peak reaches only about \(10\%\) of the initial Gaussian peak despite maintaining the correct shape.
\changed{
This behavior arises because OC interprets the swarm probability density as a continuously evolving belief governed solely by the Fokker-Planck equation, whereas in practice, the robot’s PDF is repeatedly reset by new localization measurements.
As seen in Fig.~\ref{fig:BlockDiagram Optimization}, a new location measurement is taken at each timestep, effectively re-centering the robot's PDF at the measured location. In contrast, Fig.~\ref{fig:BlockDiagram Optimal} showcases that OC evolves only with the initial density, omitting this unavoidable PDF reset in applications.
Consequently, OC predicts much lower density inside \(\mathcal{A}\) than reality, since the PDF is not continuously re-centered at the robot's location. This leads to a false sense of safety near \(\partial \mathcal{A}\), as shown in dark green in the middle left plot of Fig.~\ref{fig: Single Time Sequence}, where the true PDF mass would be much higher.}
Reducing the constraint tolerance \(\epsilon\) seems intuitive, yet numerical experiments reveal that very small values of \(\epsilon\) increase numerical instability with limited improvements, as the reshaping effect grows with time. 

\begin{figure}
    \centering
    \subfloat{\includegraphics[width=0.47\linewidth]{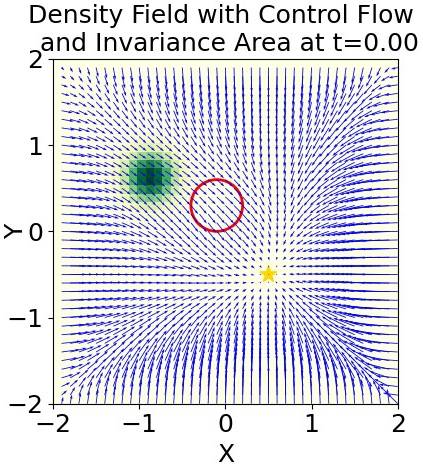}}
    \subfloat{\includegraphics[width=0.47\linewidth]{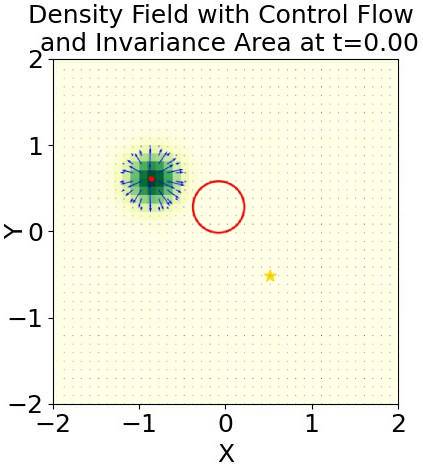}}\\    
    \subfloat{\includegraphics[width=0.47\linewidth]{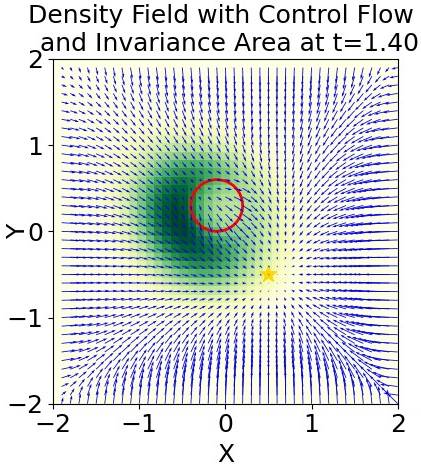}}    
    \subfloat{\includegraphics[width=0.47\linewidth]{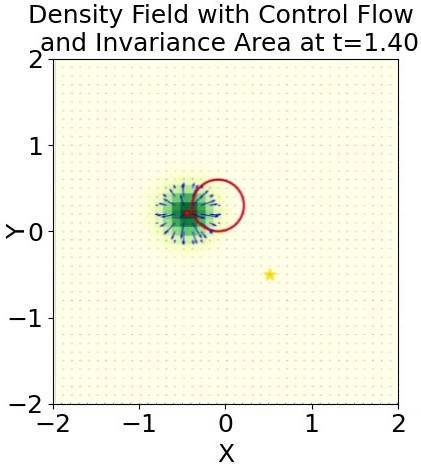}}\\    
    \subfloat{\includegraphics[width=0.47\linewidth]{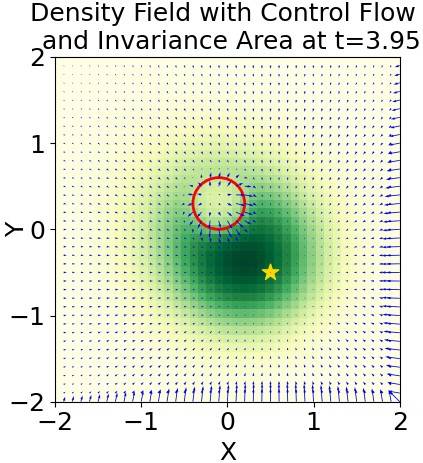}}    
    \subfloat{\includegraphics[width=0.47\linewidth]{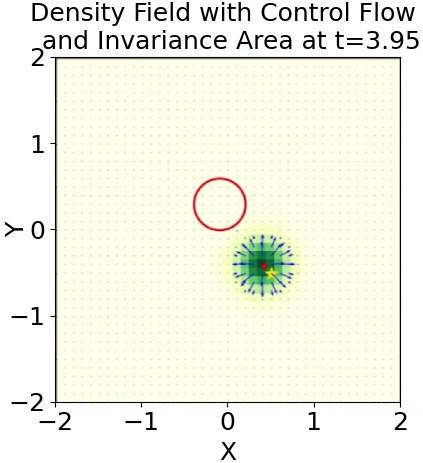}}
    \caption{Time sequence of OC (left column) and SV-OBC (right column). Star is the mean of the target, blue is the control field, the red circle represents the unsafe area, and green is robot density (darker is higher) with a red dot as the robot position.}
    \label{fig: Single Time Sequence}
\end{figure}
In contrast, \changed{the OBCs adopt a fundamentally different strategy, as they optimize the control input for just one time step ahead.} Specifically, SV-OBC generates a control field that is nearly zero everywhere except in the region immediately around the robot, where it radially pushes the PDF outward to flatten it towards the target, as seen in the right column of Fig.~\ref{fig: Single Time Sequence}. This approach yields lower commanded control effort by interpolating around the agent location.
RV-OBC, by design, uses uniform control fields characterized by \(u_i^* \in \mathbb{R}^2\). This formulation provides a more accurate description of the noisy system and ensures safety at the tradeoff of potentially higher control magnitude. 

The distinction between the three approaches in control efforts is shown in the top plot of Fig.~\ref{fig: performance plot-single robot}. OC has the lowest control cost with a globally Lipschitz-continuous control field \cite{OptimalMain}, while SV-OBC applies aggressive, discontinuous control commands. It also has a lower steady-state control effort than RV-OBC by interpolating within radially outwards control fields. Notably, RV-OBC's near-constant control effort after \(1\)~s is the result of oscillation near the target center. At this stage, RV-OBC is expected to invest more effort as moving the entire PDF achieves greater CLF reductions than SV-OBC.  

\begin{figure}
    \centering
    \subfloat{\includegraphics[width=0.6\linewidth]{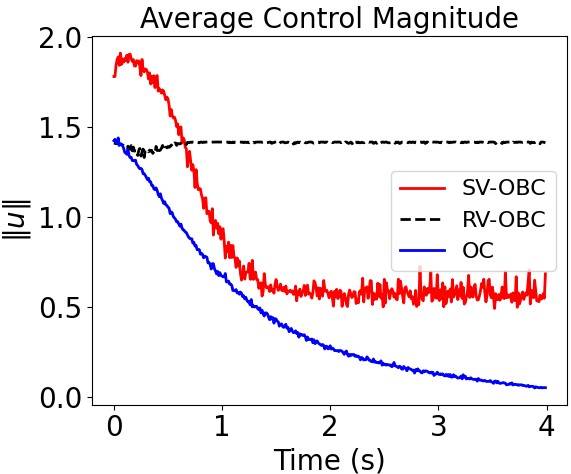}}\\    
    \subfloat{\includegraphics[width=0.6\linewidth]{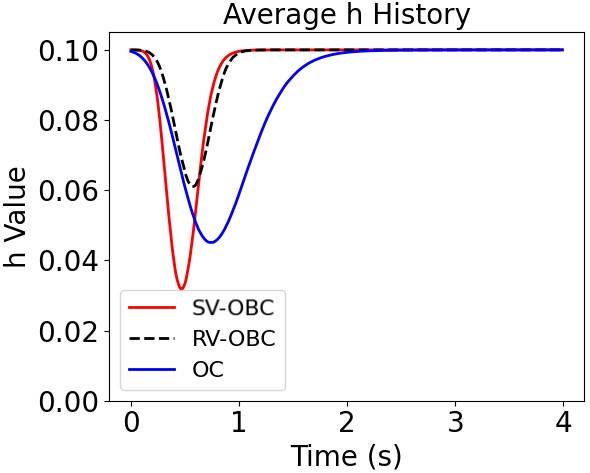}}\\    
    \subfloat{\includegraphics[width=0.6\linewidth]{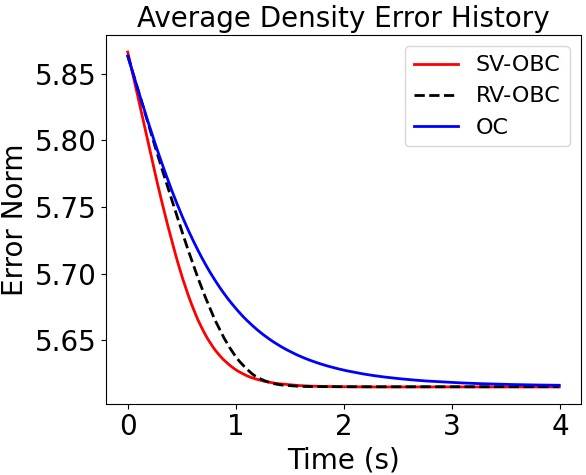}}    
    \caption{Single agent performance plots averaged with \(100\) simulations. Top: Control effort \(\|u\| \), lower means less energy spent. Middle: Safety barrier \(h\), higher means further from the danger area. Bottom: Lyapunov function \(\|\rho_d - \rho\|_{L^2(\Omega)}^2 \), lower means matching target better.}
    \label{fig: performance plot-single robot}
\end{figure}

The second plot in Fig.~\ref{fig: performance plot-single robot} shows the performance in maintaining area invariance, measured by the barrier function \(h\). As suggested by the control effort plot, SV-OBC employs the most discontinuous, aggressive adjustments in control directions, rapidly pushing the agent towards the invariance boundary if beneficial in the CLF. As a result, it has the lowest safety value out of the three controllers. However, with a smaller localization noise based PDF, it never violated the \(h\geq 0 \) criteria for forward invariance. Conversely, OC, while having a higher average safety measure, violated the safety constraint in \(19\) out of \(100\) simulations. Most violations are from the agents located too close to the boundary of \(\mathcal{A}\). As noted earlier, OC treats an agent at the boundary of \(\mathcal{A}\) as safe, but the safety of the measured position does not guarantee the safety of the true position due to resetting PDFs. Finally, since RV-OBC is designed with the most accurate relationship between measurement and true position, it has the highest average safety score with zero violations even in the presence of noise. 

Lastly, the final plot in Fig.~\ref{fig: performance plot-single robot} highlights the density matching error with a predefined target. Unsurprisingly, both OBCs outperformed OC with their more expensive control efforts spent. RV-OBC, while it converges to the target slightly slower, has the benefit of less aggressive control input and higher safety values than SV-OBC.

Therefore, just from single-agent behavior alone, we can conclude that OC is unsuitable for real-life safety-critical applications with resetting localization noise. 
While SV-OBC satisfied the safety threshold in these tests, it failed when the noise gets larger, as predicted in Fig.~\ref{fig: field control edgecase}. Therefore, in the next section, multi-robot simulations are run only for the safer and more accurate models, RV-OBC and RV-OBC-V.

\subsection{Multi-robot Study}
\begin{figure}
    \centering
    \subfloat{\includegraphics[width=0.6\linewidth]{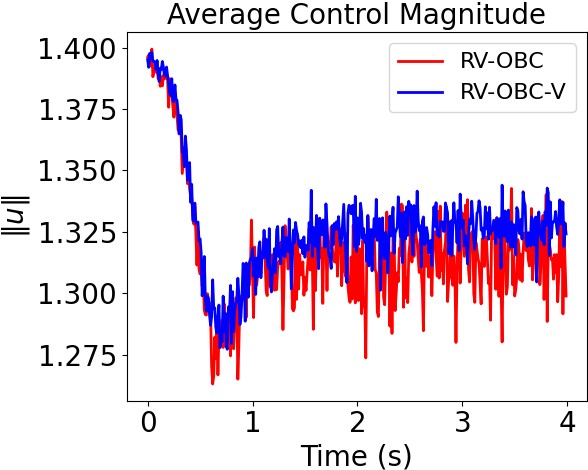}}\\    
    \subfloat{\includegraphics[width=0.6\linewidth]{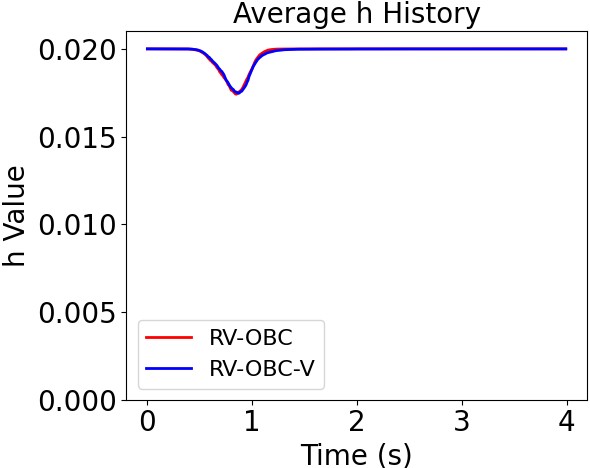}}\\    
    \subfloat{\includegraphics[width=0.6\linewidth]{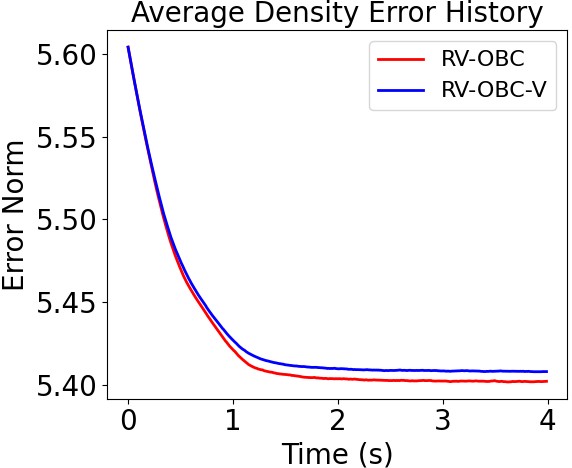}}    
    \caption{Six agent performance plots averaged with \(100\) simulations. Top: Control effort \(\|u\| \), lower means less energy spent. Middle: Safety barrier \(h\), higher means further from the danger area. Bottom: Lyapunov function \(\|\rho_d - \rho\|_{L^2(\Omega)}^2 \), lower means matching target better.}
    \label{fig: performance plot-6 robot}
\end{figure}

The proposed controllers are validated using multi-robot simulations to demonstrate their effectiveness in maintaining safety constraints while trying to match a target PDF. 
We take a closer look at a \(6\)-robot system, whose averaged performance evaluation over \(100\) runs is shown in Fig.~\ref{fig: performance plot-6 robot}. It is important to remember that RC-OBC-V is a Voronoi-based variant of the RC-OBC model aimed to increase computation speed with a trade-off for precise Lyapunov and barrier function evaluations, as discussed in Section~\ref{subsec: RV-OBC-V}.

From these plots, it is clear to see that RC-OBC-V applies slightly \changed{larger aggregated control magnitude over \(6\) robots} than the original RC-OBC, though in a less aggressive fashion, to achieve nearly identical safety results. In all \(100\) simulations, both controllers never violated the \(h \geq 0\) safety constraint, even under localization and motion noises. As expected, the difference, \(V_c - V\), is higher when the robots are closer together after converging to the target, causing a slightly higher equilibrium cost. In contrast, as stated in Remark~\ref{Remark: hc - h tiny}, the effect of the barrier function differences is negligible in comparison since fewer cells intersect with \(\mathcal{A}\). 

Then, we examine a simulation out of the \(100\) runs, as shown in the time sequence plots in \changed{Fig.~\ref{fig:6_robot_timeSequence}} and Supplementary Video~\(1\). These plots are set up with the target PDF in green and its mean at the gold star, \(\mathcal{A}\) in red, the measured positions as blue dots, and the true position trails in blue with black contour PDFs. 
The top row displays the RV-OBC result, while the bottom row shows the RV-OBC-V variant with cells partitioned in black lines.
\begin{figure*}[ht]
    \centering
    \subfloat{\includegraphics[width=0.165\linewidth]{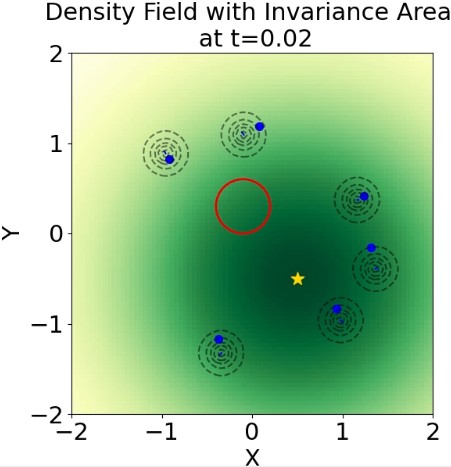}}%
    \subfloat{\includegraphics[width=0.165\linewidth]{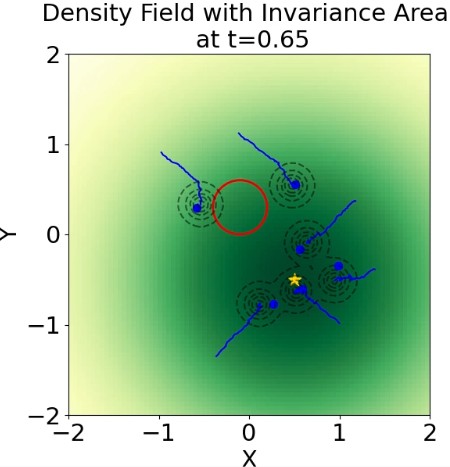}}%
    \subfloat{\includegraphics[width=0.165\linewidth]{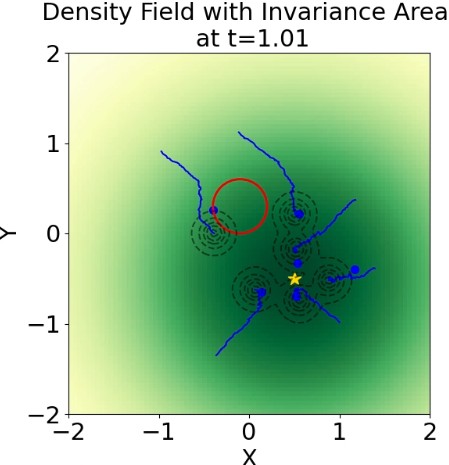}}%
    \subfloat{\includegraphics[width=0.165\linewidth]{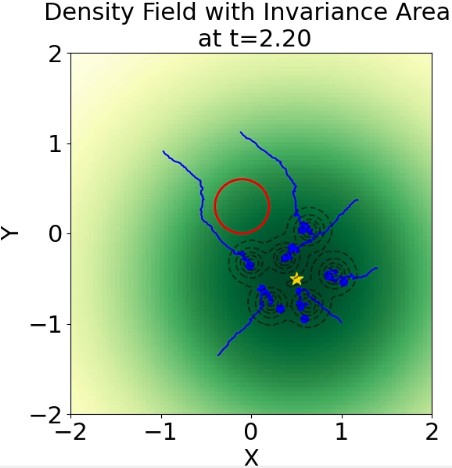}}%
    \subfloat{\includegraphics[width=0.165\linewidth]{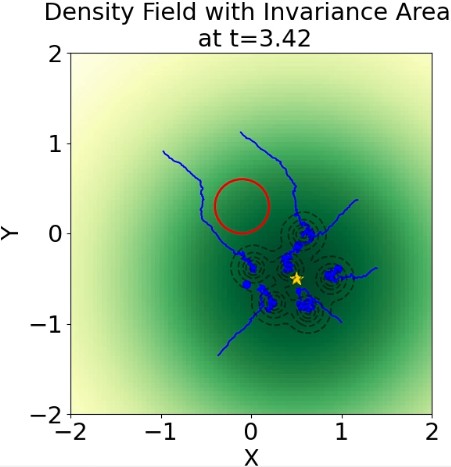}}%
    \subfloat{\includegraphics[width=0.165\linewidth]{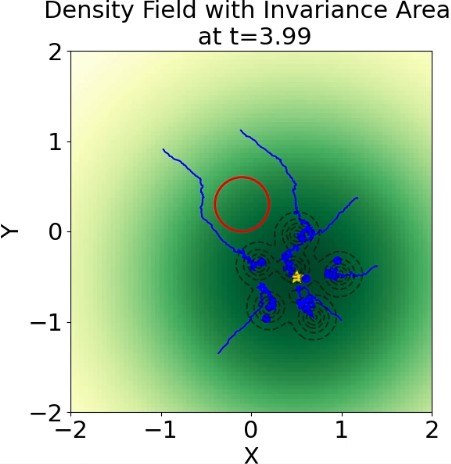}}\\    
    \subfloat{\includegraphics[width=0.165\linewidth]{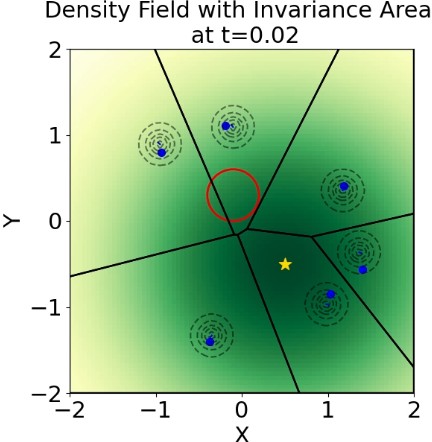}}%
    \subfloat{\includegraphics[width=0.165\linewidth]{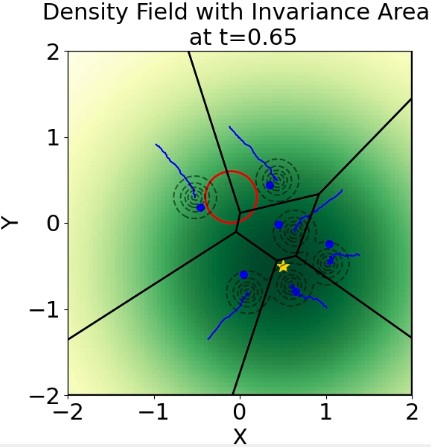}}%
    \subfloat{\includegraphics[width=0.165\linewidth]{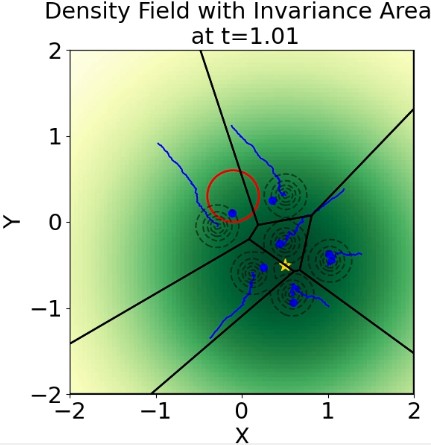}}%
    \subfloat{\includegraphics[width=0.165\linewidth]{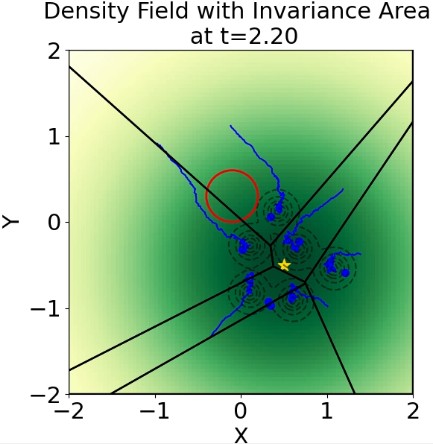}}%
    \subfloat{\includegraphics[width=0.165\linewidth]{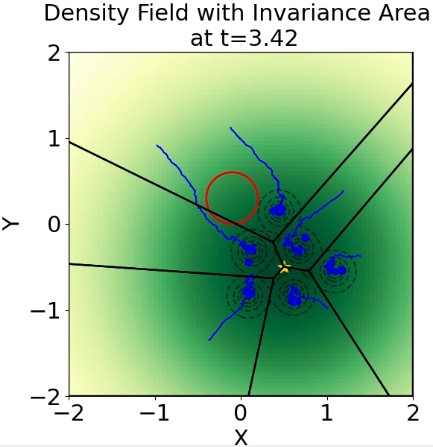}}%
    \subfloat{\includegraphics[width=0.165\linewidth]{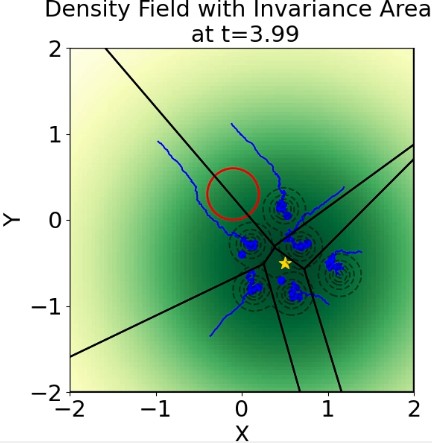}}

    \caption{Time sequence of RV-OBC (top row) and RV-OBC-V (bottom row). The star indicates the mean of the target, green is the target density (darker is higher), red is the invariance area, dark black line represents Voronoi cell boundaries, and blue represents robots' measured positions, with the true positions' trajectory trails and black PDF contour.}
    \label{fig:6_robot_timeSequence}
\end{figure*}

The RV-OBC converges smoothly towards the target, with the two top robots avoiding the danger area regardless of the noisy measurements. The robots quickly reach equilibrium near the target, with minor numerical oscillations. In contrast, the RV-OBC-V controller only considers the mass density of agents within its cell. With only noisy measurements available, the Voronoi partitions are often inaccurate, with cell boundaries crossing the PDFs, as seen in most plots. However, both of the top agents avoided the danger area and successfully converged towards the target, since they are sufficiently apart and cover \(\mathcal{A}\) with only two cells. After \(1\)~s, the robots cluster closely, and the inaccuracy in \(V_c\) began to affect the performance, as reflected in Fig.~\ref{fig: performance plot-6 robot}. 

\begin{remark}
    The safety performance of the RV-OBC-V controller is affected by the relation between cell size and the localization error defined by \(\Sigma\). One disruptive behavior arises when the cell boundary intersects with an agent's PDF, causing a nonzero probability that the measured position falls outside of the true cell. Moreover, in some cases, where two agents are so close together, their measured positions appear in each other's cells, effectively flipping the cell assignment and applying controls in the opposite directions, e.g., time \(2.20\)~s in Fig.~\ref{fig:6_robot_timeSequence}. While this effect might cause safety failures, it is only possible when agents are so close that their PDFs overlap and cover \(\mathcal{A}\) with a non-trivial amount. This further supports our claim that RV-OBC-V performs better when agents are further apart, challenging the stereotype that swarms are more effective in close proximity.  
\end{remark}

Finally, the runtimes of the two controllers are compared to validate the claim that RV-OBC-V significantly improves computation efficiency. \changed{Simulations with \(6, 10, 15\) and \(20\) robots are repeated for \(100\) times to obtain the average runtime per timestep with an AMD Ryzen $5$ $5600$ $6$-core processor, using the MOSEK Fusion API for Python \cite{mosek}.} Based on Fig.~\ref{fig: RuntimePlot}, with fewer agents, the overhead of Voronoi partition causes the RV-OBC-V to run slightly slower than RV-OBC. However, as the swarm size increases, the runtime for \changed{RV-OBC-V} stays nearly constant while RV-OBC grows rapidly. This difference arises from the structure of the controllers: RV-OBC must compute the full \(N_x \times N_y\) field for each new agent, whereas RV-OBC-V performs a light-weight partition step and optimizes in a single field regardless of swarm sizes. 

\begin{figure}
    \centering
    \subfloat{\includegraphics[width=0.7\linewidth]{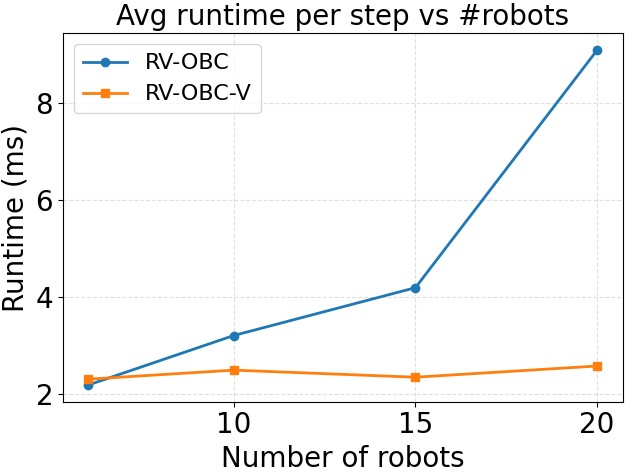}}    
    \caption{Runtime plot of RV-OBC and RV-OBC-V. The experiments were conducted on a \(81 \times 81\) grid and averaged over \(100\) runs each.}
    \label{fig: RuntimePlot}
\end{figure}

In summary, the simulations demonstrated that both controllers are able to maintain the safety constraint under localization and motion noise, with RC-OBC-V achieving comparable results with significantly reduced computational requirements. In the next section, we test the RC-OBC-V with physical robots to demonstrate its practical applicability, as it offers scalability over RC-OBC. 

\section{Experiment}
The proposed RC-OBC-V controller is implemented in a team of \(6\) DJI omnidirectional robots, \cite{DJI:RoboMasterS1UserManual}, \changed{in a square field with side length \(4\)~m.} The setup inherently includes real-world uncertainties: localization is achieved by a Vicon system, \cite{Vicon:NexusUserGuide}, subject to sensor noise, while the robots themselves are imperfect actuators, with friction and rapid directional changes introducing motion noise. The experiment was run for \(27.5\)~s and the resulting time sequence is shown in Fig.~\ref{fig: exp_robot_timeSequence} and Supplementary Video~\(2\).

As shown in the plots, the \(6\) robots are initially spread around the field, with the invariance area blocking half of the robots' direct paths to the target. Robots start converging towards the target directly for the first \(7\)~s, guided by the CLF, as robots are not close to \(\mathcal{A}\). As the top left three robots approach \(\mathcal{A}\), the resulting CBF-dominated control enforces their safe movement around the danger zone while approaching the target. Finally, as all robots are past the danger zone, some numerical oscillations are seen in the last few seconds as they converge towards an equilibrium distribution at the target location.

\begin{figure*}
    \centering
    \subfloat{\includegraphics[height=0.2\linewidth]{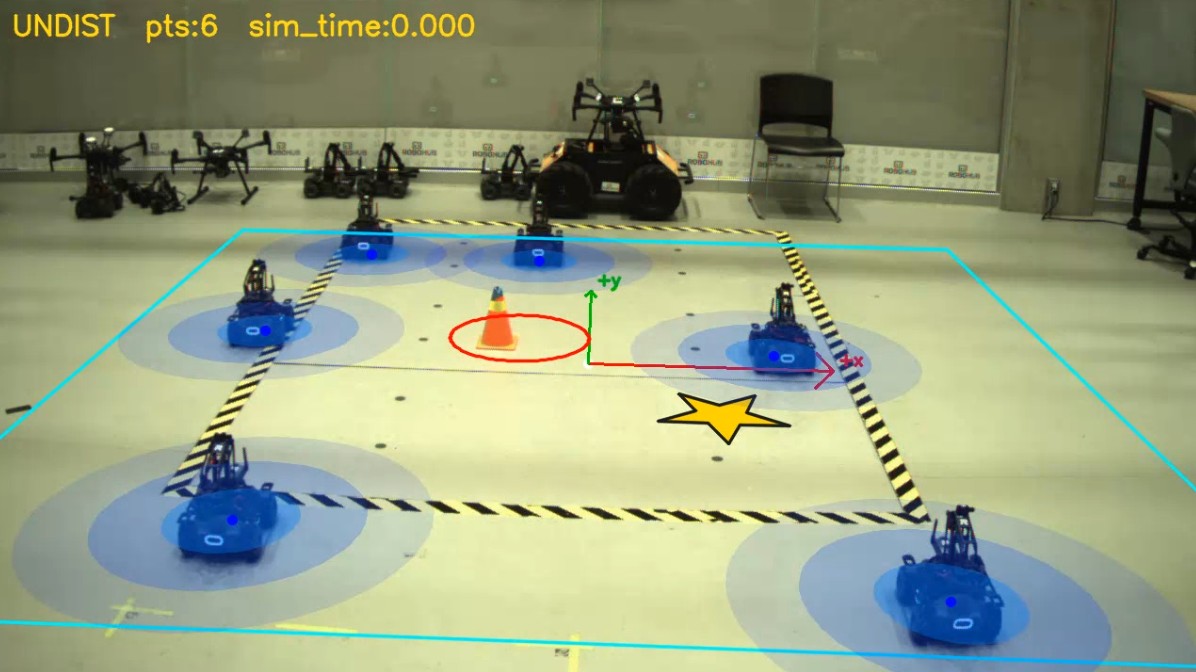}}\hspace{0.5cm}
    \subfloat{\includegraphics[height=0.2\linewidth]{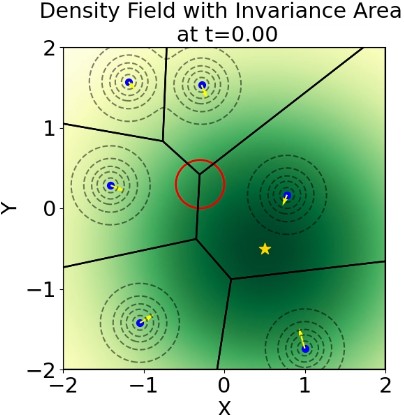}}\\
    \subfloat{\includegraphics[height=0.2\linewidth]{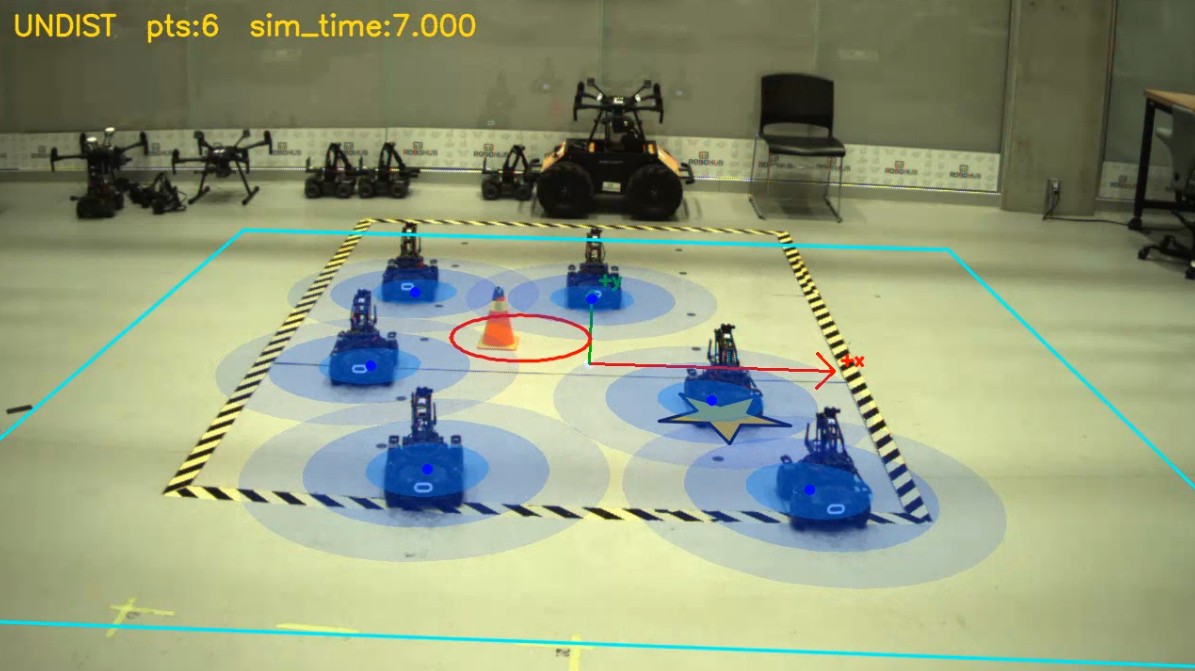}}\hspace{0.5cm}
    \subfloat{\includegraphics[height=0.2\linewidth]{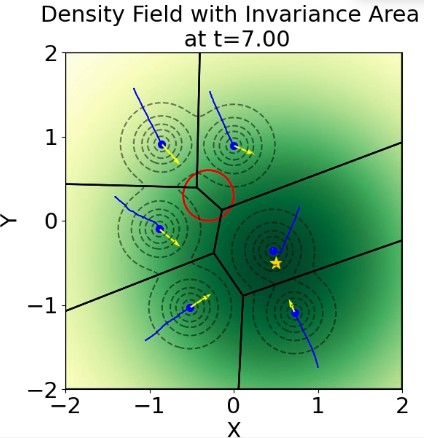}}\\
    \subfloat{\includegraphics[height=0.2\linewidth]{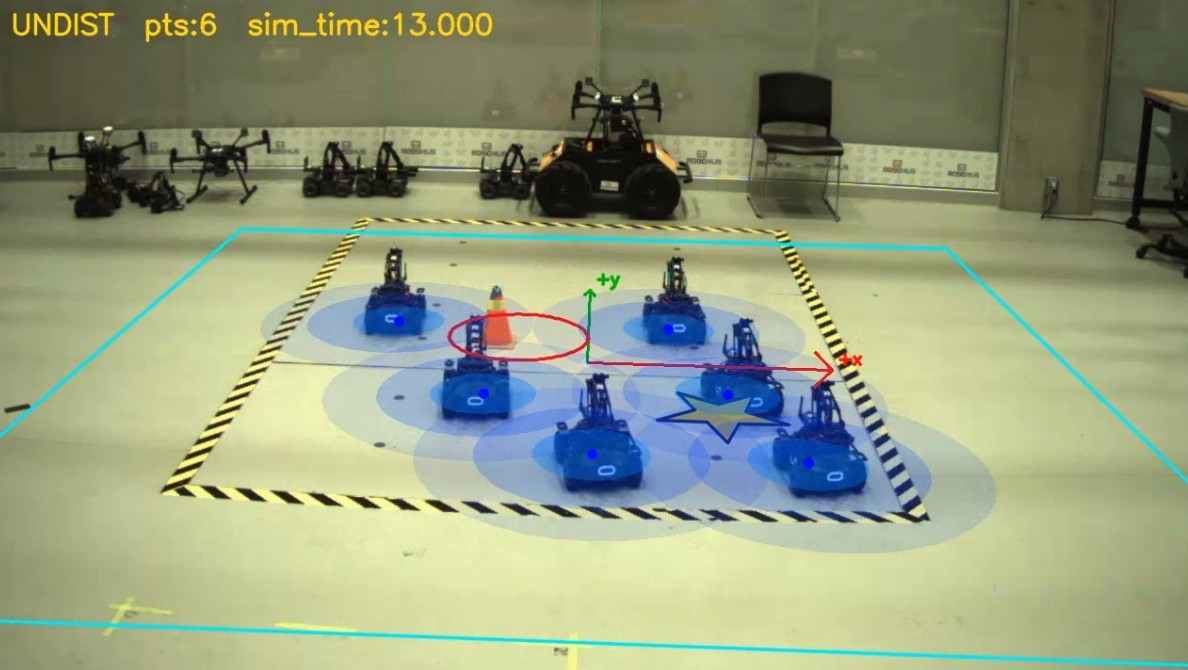}}\hspace{0.5cm}
    \subfloat{\includegraphics[height=0.2\linewidth]{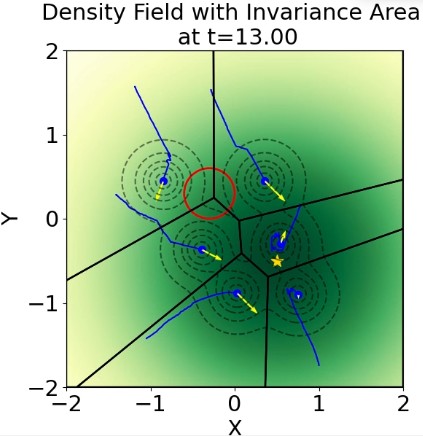}}\\
    \subfloat{\includegraphics[height=0.2\linewidth]{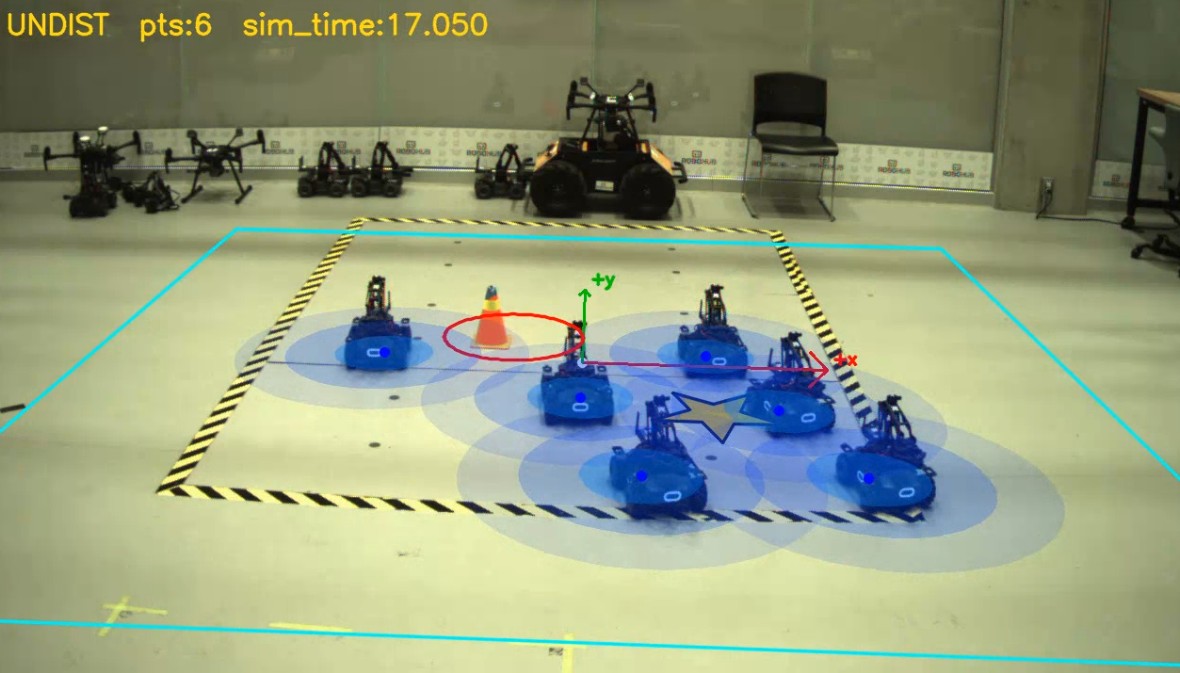}}\hspace{0.5cm}
    \subfloat{\includegraphics[height=0.2\linewidth]{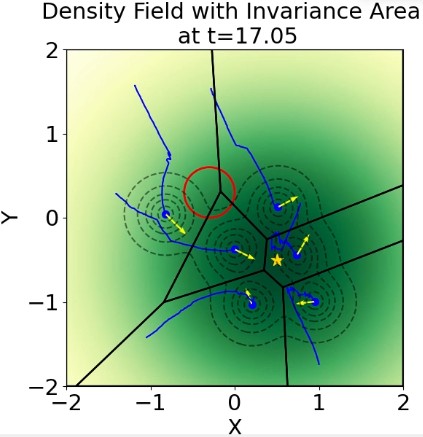}}\\
    \subfloat{\includegraphics[height=0.2\linewidth]{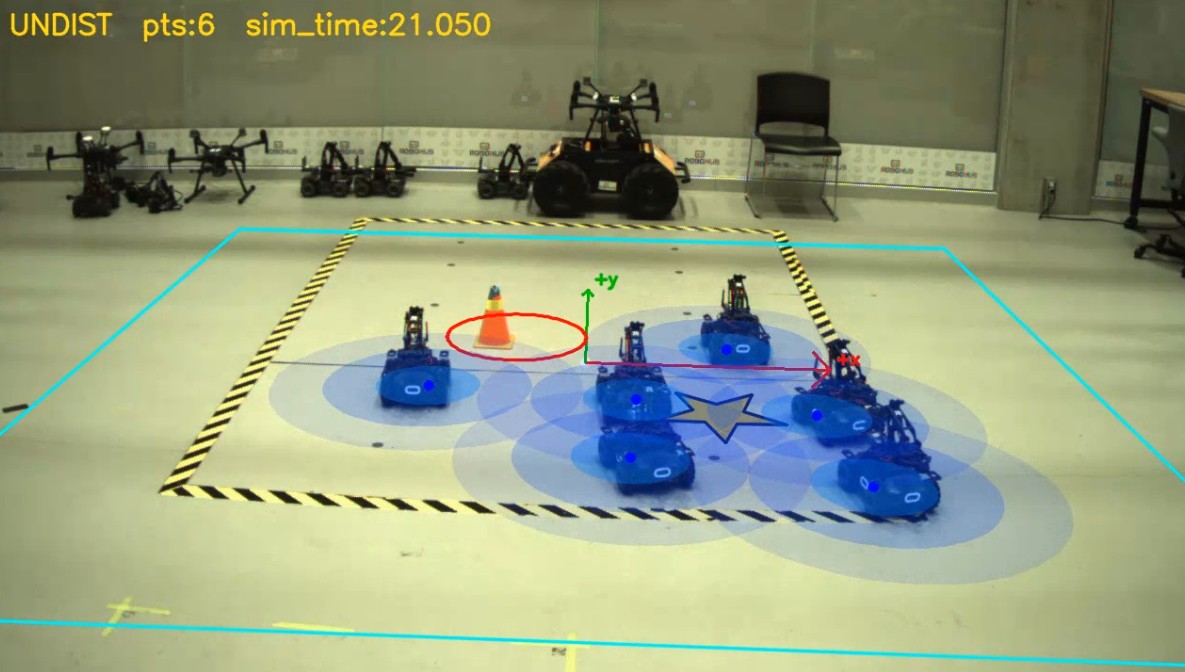}}\hspace{0.5cm}
    \subfloat{\includegraphics[height=0.2\linewidth]{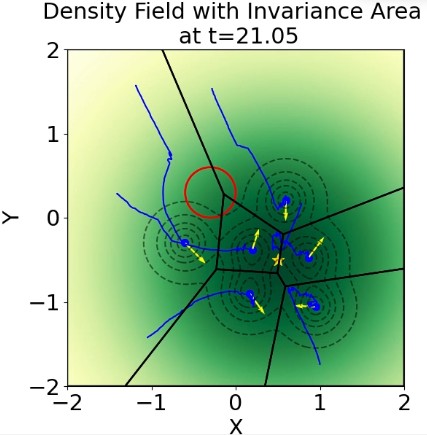}}\\
    \subfloat{\includegraphics[height=0.2\linewidth]{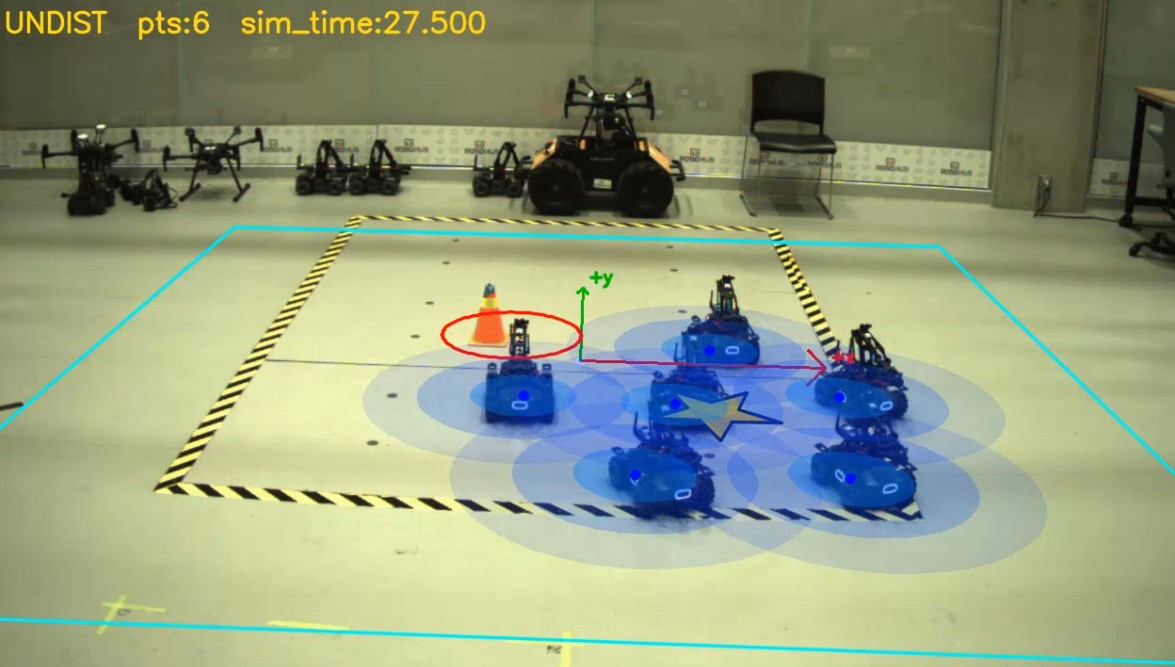}}\hspace{0.5cm}
    \subfloat{\includegraphics[height=0.2\linewidth]{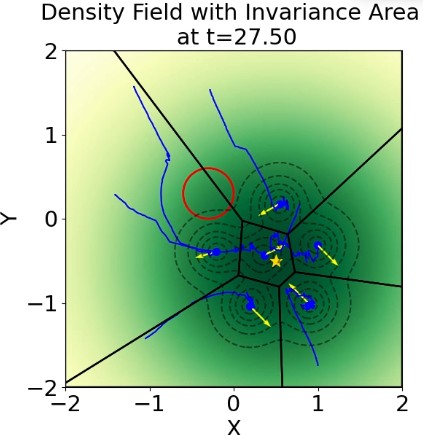}}
    \caption{Time sequence of the robot experiment. The star indicates the mean of the target, green is the target density (darker is higher), red is the invariance area, dark black line represents Voronoi cell boundaries, and blue represents robots' measured positions, with trajectory trails and black PDF contour.}
    \label{fig: exp_robot_timeSequence}
\end{figure*}

The corresponding performance evaluations are shown in Fig.~\ref{fig: performance plot experiment}. As expected, the density matching error starts high as robots are distributed around the map, and decreases as they converge towards the target PDF. The safety function decreases slightly as the three robots approach the danger area on their path to the target. As they maneuver around and past \(\mathcal{A}\), the safety measurement returns to its initial values. Notably, the controller only has access to the Voronoi cell version of measurements, \(V_c\) and \(h_c\), for computation efficiency. The post-processed true measurements, \(V\) and \(h\), have slightly lower values. The gap between the two Lyapunov measurements increases as robots are in closer proximity near the end, and the difference in safety is negligible, as stated in Remark~\ref{Remark: hc - h tiny}. 

\begin{figure}
    \centering
    \subfloat{\includegraphics[width=0.49\linewidth]{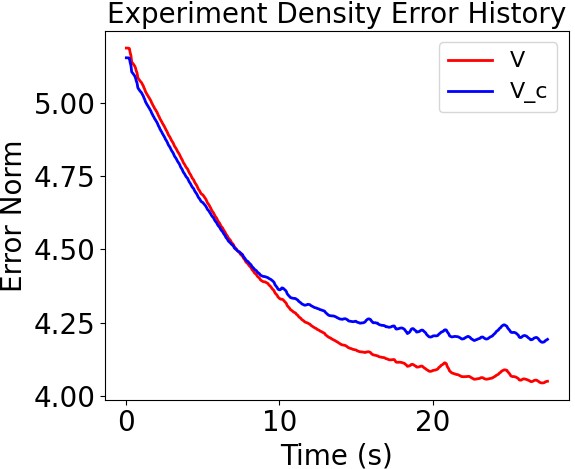}}    
    \subfloat{\includegraphics[width=0.5\linewidth]{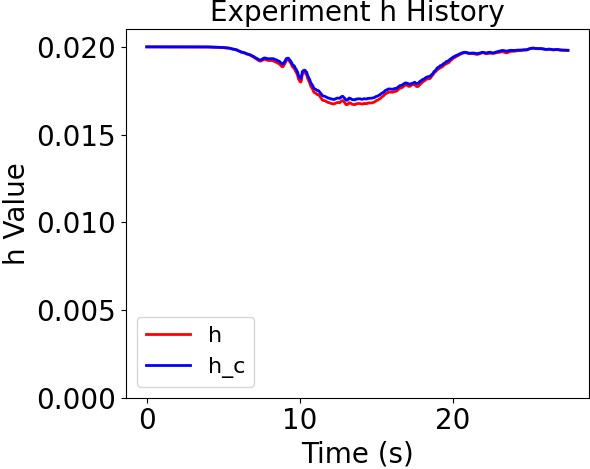}}    
    \caption{Experiment performance plots. Left: Lyapunov function \(V, V_c\), lower means matching the target better. Right: Safety barrier \(h, h_c\), higher means further from the danger area.}
    \label{fig: performance plot experiment}
\end{figure}

In this section, we demonstrated the practical applicability of our proposed controller, RC-OBC-V, with a \(6\) robot experiment. The results prove the effectiveness of the controller with real-world sensor and motion noises.

\section{Conclusion}
\label{section: Conclusion}
In this paper, we introduced a safety-critical optimization-based control strategy employing control Lyapunov and control barrier functions to guide swarm density dynamics governed by the Fokker–Planck equation. The proposed RV-OBC formulates a systematic framework that ensures accurate danger avoidance while following a predefined target, even under localization and motion noise. The RV-OBC-V variant further improves the computational efficiency and supports large-scale applications with both swarm and field size. Interestingly, our mathematical analysis and simulations reveal that the controller RV-OBC-V is more effective and safer in a sparse swarm, contrary to the common assumption that controllers work better in dense swarms.

Comparative simulations highlighted that the current state-of-the-art constrained OC method is unsuitable for safe-critical applications with sensor and motion noises. In contrast, both RV-OBC and RV-OBC-V controllers are tested in simulations with consistent success, and a \(6\) robot experiment further demonstrates the real-world applicability of RV-OBC-V. Therefore, we envision these two controllers, especially RV-OBC-V, for large-scale swarm-robotic applications such as wildfire suppression, search-and-rescue missions, and large-scale agriculture management.

\bibliography{wileyNJD-Chicago}

@INPROCEEDINGS{CBF,
  author={Ames, Aaron D. and Coogan, Samuel and Egerstedt, Magnus and Notomista, Gennaro and Sreenath, Koushil and Tabuada, Paulo},
  booktitle={2019 18th European Control Conference (ECC)}, 
  title={Control Barrier Functions: Theory and Applications}, 
  year={2019},
  volume={},
  number={},
  pages={3420-3431},
  doi={10.23919/ECC.2019.8796030}}

@article{OptimalMain,
title = {Optimal control of the Fokker-Planck equation under state constraints in the Wasserstein space},
journal = {Journal de Mathématiques Pures et Appliquées},
volume = {175},
pages = {37-75},
year = {2023},
issn = {0021-7824},
doi = {https://doi.org/10.1016/j.matpur.2023.05.002},
url = {https://www.sciencedirect.com/science/article/pii/S0021782423000594},
author = {Samuel Daudin}
}

@manual{Vicon:NexusUserGuide,
  title        = {Vicon Nexus User Guide},
  author       = {{Vicon Motion Systems Ltd.}},
  organization = {Vicon Motion Systems Ltd.},
  year         = {2024},
  version      = {Revision 6 (for Nexus 2.16)},
  url          = {https://help.vicon.com/download/attachments/11611993/Vicon%20Nexus%20User%20Guide.pdf},
  note         = {Accessed: 2025-09-26}
}

@misc{MRS_paper,
      title={Safe Decentralized Density Control of Multi-Robot Systems using PDE-Constrained Optimization with State Constraints}, 
      author={Longchen Niu and Gennaro Notomista},
      year={2025},
      eprint={2510.20643},
      archivePrefix={arXiv},
      primaryClass={eess.SY},
      url={https://arxiv.org/abs/2510.20643}, 
      note= {Presented at the IEEE International Symposium on Multi-Robot and Multi-Agent Systems (MRS), 2025}

}

@manual{DJI:RoboMasterS1UserManual,
  title        = {RoboMaster S1 User Manual},
  author       = {{DJI}},
  organization = {DJI},
  year         = {2020},
  version      = {1.8},
  url          = {https://www.dji.com/ca/downloads/products/robomaster-s1#doc},
  note         = {Accessed: 2025-09-26}
}

@manual{mosek,
   author  = "{MOSEK ApS}",
   title = {MOSEK Fusion API for Python 10.2.16},
   year = {2024},
   url = {https://docs.mosek.com/10.2/pythonfusion/index.html}
 }

@article{Fire,
  title={The use of swarms of unmanned aerial vehicles in mitigating area coverage challenges of forest-fire-extinguishing activities: A systematic literature review},
  author={Alsammak, Ihab L Hussein and Mahmoud, Moamin A and Aris, Hazleen and AlKilabi, Muhanad and Mahdi, Mohammed Najah},
  journal={Forests},
  volume={13},
  number={5},
  pages={811},
  year={2022},
  publisher={MDPI}
}

@article{briani2016stablesolutionspotentialmean,
  title={Stable solutions in potential mean field game systems},
  author={Briani, Ariela and Cardaliaguet, Pierre},
  journal={Nonlinear Differential Equations and Applications NoDEA},
  volume={25},
  number={1},
  pages={1},
  year={2018},
  publisher={Springer}
}

@ARTICLE{CLF_PDE_NoDensity,
  author={Yamaguchi, Kaiyo and Endo, Takahiro and Matsuno, Fumitoshi},
  journal={IEEE Transactions on Control Systems Technology}, 
  title={Formation Control of Multiagent System Based on Higher Order Partial Differential Equations}, 
  year={2022},
  volume={30},
  number={2},
  pages={570-582},
  doi={10.1109/TCST.2021.3068401}}

@article{safety_car_distance,
  title={Control input design for a robot swarm maintaining safety distances in crowded environment},
  author={Origane, Yuki and Hattori, Yuya and Kurabayashi, Daisuke},
  journal={Symmetry},
  volume={13},
  number={3},
  pages={478},
  year={2021},
  publisher={MDPI}
}

@article{wang2017safety,
  title={Safety barrier certificates for collisions-free multirobot systems},
  author={Wang, Li and Ames, Aaron D and Egerstedt, Magnus},
  journal={IEEE Transactions on Robotics},
  volume={33},
  number={3},
  pages={661--674},
  year={2017},
  publisher={IEEE}
}

@INPROCEEDINGS{CBF-PDE2ODE,
  author={Park, Younghwa and Sloth, Christoffer},
  booktitle={2023 11th International Conference on Control, Mechatronics and Automation (ICCMA)}, 
  title={Discretization-Robust Safety Barrier of Partial Differential Equation}, 
  year={2023},
  volume={},
  number={},
  pages={49-54},
  doi={10.1109/ICCMA59762.2023.10374980}}

@article{Optimal_withInvariance,
title = {Robust optimal density control of robotic swarms},
journal = {Automatica},
volume = {176},
pages = {112218},
year = {2025},
issn = {0005-1098},
doi = {https://doi.org/10.1016/j.automatica.2025.112218},
author = {Carlo Sinigaglia and Andrea Manzoni and Francesco Braghin and Spring Berman}
}

@INPROCEEDINGS{Optimal_noInvariance,
  author={Elamvazhuthi, Karthik and Berman, Spring},
  booktitle={2015 IEEE International Conference on Robotics and Automation (ICRA)}, 
  title={Optimal control of stochastic coverage strategies for robotic swarms}, 
  year={2015},
  volume={},
  number={},
  pages={1822-1829},
  doi={10.1109/ICRA.2015.7139435}}

@ARTICLE{RalPaper,
 author={Niu, Longchen and Notomista, Gennaro},
  journal={IEEE Robotics and Automation Letters}, 
  title={Decentralized Density Control of Multi-Robot Systems Using PDE-Constrained Optimization}, 
  year={2025},
  volume={10},
  number={4},
  pages={4045-4052},
  doi={10.1109/LRA.2025.3548501}}

@article{exisitenceMeanFieldGameWithSTate,
author="Cannarsa, Piermarco
and Capuani, Rossana",
editor="Cardaliaguet, Pierre
and Porretta, Alessio
and Salvarani, Francesco",
title="Existence and Uniqueness for Mean Field Games with State Constraints",
bookTitle="PDE Models for Multi-Agent Phenomena",
year="2018",
publisher="Springer International Publishing",
address="Cham",
pages="49--71",
isbn="978-3-030-01947-1",
doi="10.1007/978-3-030-01947-1_3",
url="https://doi.org/10.1007/978-3-030-01947-1_3"
}

@article{quasilinear_parabolic_PDE_OptimalControl_StateConstraint,
author = {Hoppe, Fabian and Neitzel, Ira},
title = {Optimal Control of Quasilinear Parabolic PDEs with State-Constraints},
journal = {SIAM Journal on Control and Optimization},
volume = {60},
number = {1},
pages = {330-354},
year = {2022},
doi = {10.1137/20M1383951},
URL = { 
        https://doi.org/10.1137/20M1383951
},
eprint = { 
        https://doi.org/10.1137/20M1383951
}
}

@INPROCEEDINGS{8206299,
  author={Li, Hanjun and Feng, Chunhan and Ehrhard, Henry and Shen, Yijun and Cobos, Bernardo and Zhang, Fangbo and Elamvazhuthi, Karthik and Berman, Spring and Haberland, Matt and Bertozzi, Andrea L.},
  booktitle={2017 IEEE/RSJ International Conference on Intelligent Robots and Systems (IROS)}, 
  title={Decentralized stochastic control of robotic swarm density: Theory, simulation, and experiment}, 
  year={2017},
  volume={},
  number={},
  pages={4341-4347},
  doi={10.1109/IROS.2017.8206299}}

@article{doi:10.1177/0278364908100177,
author = {Mac Schwager and Daniela Rus and Jean-Jacques Slotine},
title ={Decentralized, Adaptive Coverage Control for Networked Robots},
journal = {The International Journal of Robotics Research},
volume = {28},
number = {3},
pages = {357-375},
year = {2009},
doi = {10.1177/0278364908100177},
eprint = {https://doi.org/10.1177/0278364908100177}}

@article{TERUEL201951,
title = {A distributed robot swarm control for dynamic region coverage},
journal = {Robotics and Autonomous Systems},
volume = {119},
pages = {51-63},
year = {2019},
issn = {0921-8890},
doi = {https://doi.org/10.1016/j.robot.2019.06.002},
author = {Enrique Teruel and Rosario Aragues and Gonzalo López-Nicolás}}

@article{Elamvazhuthi_Berman_2019, 
title={Mean-field models in Swarm Robotics: A survey}, 
volume={15}, 
DOI={10.1088/1748-3190/ab49a4}, 
number={1}, 
journal={Bioinspiration \& Biomimetics}, 
author={Elamvazhuthi, Karthik and Berman, Spring}, 
year={2019}, 
month={Nov}, 
pages={015001}}

@article{Archer2004DynamicalDF,
  title={Dynamical density functional theory for interacting Brownian particles: stochastic or deterministic?},
  author={Andrew J. Archer and Markus Rauscher},
  journal={Journal of Physics A},
  year={2004},
  volume={37},
  pages={9325-9333}
}

@article{ANNUNZIATO2013487,
title = {A Fokker–Planck control framework for multidimensional stochastic processes},
journal = {Journal of Computational and Applied Mathematics},
volume = {237},
number = {1},
pages = {487-507},
year = {2013},
issn = {0377-0427},
doi = {https://doi.org/10.1016/j.cam.2012.06.019},
author = {M. Annunziato and A. Borzì}}

@INPROCEEDINGS{application_enivronment,
  author={Carpentiero, Marco and Gugliermetti, Luca and Sabatini, Marco and Palmerini, Giovanni B.},
  booktitle={2017 IEEE 14th International Conference on Networking, Sensing and Control (ICNSC)}, 
  title={A swarm of wheeled and aerial robots for environmental monitoring}, 
  year={2017},
  volume={},
  number={},
  pages={90-95},
  doi={10.1109/ICNSC.2017.8000073}}

@article{application_SaR,
title = {Behavior-based swarm robotic search and rescue using fuzzy controller},
journal = {Computers \& Electrical Engineering},
volume = {70},
pages = {53-65},
year = {2018},
issn = {0045-7906},
doi = {https://doi.org/10.1016/j.compeleceng.2018.06.003},
author = {Ahmad Din and Meh Jabeen and Kashif Zia and Abbas Khalid and Dinesh Kumar Saini}}

@INPROCEEDINGS{application_agriculture,
  author={Albani, Dario and IJsselmuiden, Joris and Haken, Ramon and Trianni, Vito},
  booktitle={2017 14th IEEE International Conference on Advanced Video and Signal Based Surveillance (AVSS)}, 
  title={Monitoring and mapping with robot swarms for agricultural applications}, 
  year={2017},
  volume={},
  number={},
  pages={1-6},
  doi={10.1109/AVSS.2017.8078478}}

@inproceedings{application_med,
  author    = {Mohammad Majid al-Rifaie and
               Ahmed Aber and
               Remigijus Raisys},
  title     = {Swarming Robots and Possible Medical Applications},
  booktitle = {Proceedings of the 17th International Symposium on Electronic Art ({ISEA} 2011)},
  year      = {2011},
  address   = {Istanbul, Turkey}
}

@article{reis2020control,
  title={Control barrier function-based quadratic programs introduce undesirable asymptotically stable equilibria},
  author={Reis, Matheus F and Aguiar, A Pedro and Tabuada, Paulo},
  journal={IEEE Control Systems Letters},
  volume={5},
  number={2},
  pages={731--736},
  year={2020},
  publisher={IEEE}
}
\end{document}